\newtheorem{theorem}{Theorem}
\newtheorem{claim}[theorem]{Claim}
\newtheorem{conjecture}[theorem]{Conjecture}
\newtheorem{corollary}[theorem]{Corollary}
\newtheorem{example}[theorem]{Example}
\newtheorem{lemma}[theorem]{Lemma}
\newtheorem{proposition}[theorem]{Proposition}
\newtheorem{remark}[theorem]{Remark}
\newtheorem{note}[theorem]{Note}
\newcommand{\FF}{\mathbb{F}}
\newcommand{\NN}{\mathbb{N}}
\newcommand{\ZZ}{\mathbb{Z}}
\newcommand{\CC}{\mathbb{C}}
\newcommand{\RR}{\mathbb{R}}
\renewcommand{\P}{\mathcal{P}}
\newcommand{\chr}{\operatorname{char}}
\newcommand{\sing}{\operatorname{sing}}
\newcommand{\esym}[2]{e_{#2}^{#1}}
\newcommand{\mnote}[3]{\todo[color=#3!40,size=\footnotesize]{\textbf{#2:} #1}}
\newcommand{\ian}[1]{\mnote{#1}{IO}{red}}
\newcommand{\srikanth}[1]{\mnote{#1}{S}{gray}}
\def\multiset#1#2{\ensuremath{\left(\kern-.3em\left(\genfrac{}{}{0pt}{}{#1}{#2}\right)\kern-.3em\right)}}
\numberwithin{theorem}{section}
\title{Computing the Elementary Symmetric Polynomials in Positive Characteristics}
\author{Ian Orzel\footnote{Email: iano@di.ku.dk. This work was funded by the European Research Council (ERC) under grant agreement no. 101125652 (ALBA).}\\Department of Computer Science\\University of Copenhagen\\Copenhagen, Denmark}
\begin{document}

\maketitle

\begin{abstract}
We first extend the results of Chatterjee, Kumar, Shi, Volk (Computational Complexity 2022)
by showing that the degree $d$ elementary symmetric polynomials in $n$ variables have formula lower bounds of $\Omega(d(n-d))$ over fields of positive characteristic.
Then, we show that the results of the universality of linear projections of elementary symmetric polynomials from Shpilka (Journal of Computer and System Sciences 2002) and of border fan-in two $\Sigma\Pi\Sigma$ circuits from Kumar (ACM Trans. Comput. Theory 2020) over zero characteristic fields do not extend to fields of positive characteristic.
In particular, we show that
\begin{enumerate}
    \item There are polynomials that cannot be represented as linear projections of the elementary symmetric polynomials (in fact, we show linear lower bounds over the size of the sum of such linear projections) and
    \item There are polynomials that cannot be computed by border depth-$3$ circuits of top fan-in $k$, called $\overline{\Sigma^{[k]}\Pi\Sigma}$, for $k = o(n)$.
\end{enumerate}

To prove the first result, we consider a geometric property of the elementary symmetric polynomials, namely, the set of all points in which the polynomial and all of its first-order partial derivatives vanish.
It was previously shown that the dimension of this space was exactly $d-2$ for fields of zero characteristic.
We extend this to fields of positive characteristic by showing that this dimension must be between $d-2$ and $d-1$.
In fact, we provide some criterion where it is $d-2$ and others where it is $d-1$.

Then, to consider the border top fan-in of linear projections of the elementary symmetric polynomials and border depth-$3$ circuits (sometimes called border affine Chow rank), we show that it is sufficient to consider the border top fan-in of the sum of such linear projections of the elementary symmetric polynomials.
This is done by an explicit construction of a 'metapolynomial,' meaning that this result also applies in the border setting.
\end{abstract}

\section{Introduction}
Given $n$ independent variables $x_1, \ldots, x_n$ and a degree $d \le n$, we define the \emph{elementary symmetric polynomial} of degree $d$ (over an arbitrary field $\FF$ of characteristic denoted by $\operatorname{char}(\FF)$) to be the sum of all degree $d$ multilinear monomials, i.e.,
\[
\esym{n}{d} = \sum_{S \in {[n] \choose d}}\prod_{i \in S}x_i.
\]
This polynomial naturally appears in many situations; for instance, they show up when considering the product of affine polynomials.
\begin{equation}\label{eq:prod_to_esp}
\prod_{i=1}^n(x + a_i) = \sum_{k = 0}^nx^{n-k}\esym{n}{k}(a_1, \ldots, a_n).
\end{equation}
We say that these polynomials are symmetric because they are invariant under permutations of the variables, and we say that they are elementary because they are the basic building blocks of symmetric polynomials; namely, the set of symmetric polynomials can be expressed as $\FF[\esym{n}{0}, \ldots, \esym{n}{n}]$.
We include more fundamental properties in the appendix, Section \ref{sec:esp}.

\paragraph{Motivation}
In this paper, we will study the computation of elementary symmetric polynomials through the lens of algebraic complexity theory.
We commonly define computation through \emph{algebraic circuits}, which we model using acyclic directed graphs, where certain nodes represent the operations of addition or multiplication and other nodes represent inputs, which are given as independent variables and field constants.
We can then restrict this to so-called \emph{algebraic formulas}, which require that the underlying graph be a tree (each node has at most one outgoing edge).
We can further restrict the depth of a tree and force layers to alternate in operations, creating, for example, \emph{depth-three $\Sigma\Pi\Sigma$ formulas}, where we force the root node to be an addition node.
We may further restrict this model to \emph{$\Sigma^{[k]}\Pi\Sigma$ formulas}, where the root node may only have at most $k$ children.

\subparagraph{Elementary symmetric polynomials}
Elementary symmetric polynomials are very common polynomials that appear naturally in various algebraic complexity problems.
These polynomials are relatively easy to compute (they can be computed by $\Sigma\Pi\Sigma$ formulas of size $O(n^2)$), so there is hope they are simple enough that a deep understanding of them is achievable.
One recent motivation of them was explored in \cite{LST22} and \cite{FLST24}, where the computation of the elementary symmetric polynomials naturally appears from studying constant-depth arithmetic formulas.
For example, when considering $\Sigma\Pi\Sigma$ formulas, we could study how to convert these formulas to be homogeneous, where each node computes a homogeneous polynomial.
To do this, one idea is to apply (\ref{eq:prod_to_esp}) to each product gate and then compute each resulting elementary symmetric polynomial using homogeneous circuits (for instance, we could use the upper bound of \cite{SW01} for homogeneous computation of $\esym{n}{d}$ by $\Sigma\Pi\Sigma\Pi$ formulas of size $n2^{O(\sqrt{d})}$, but this only works for characteristic zero fields).
This idea can be extended to work for deeper constant-depth circuits.

\subparagraph{Fields of positive characteristic}
Specifically, in this paper, we will study such computations over fields of positive characteristic.
Recall that the characteristic of a field is the smallest $k$ such that adding one to itself $k$ times gives zero, where we say that characteristic is zero if no such $k$ exists.
For a positive prime $p > 1$, the easiest example of a field of characteristic $p$ is the integers modulo $p$, or $\FF_p = \{0, \ldots, p-1\}$.
Fields of characteristic zero are generally seen as more ``natural,'' as examples include the real and complex numbers.
Specifically, if results require division by natural numbers, they often fail in certain characteristics.
For this reason, many results in algebraic complexity theory only work in fields of characteristic zero.
This paper is focused on extending some of these results to fields of arbitrary characteristic, but it is also natural to ask why we should study this case.

One simple reasoning follows from the underlying algebraic complexity theory program, whose major goal is to show that VP is not equal to VNP (also known as Valiant's hypothesis).
From \cite{Bur00}, we see that it is sufficient to show that VP equals VNP in a field of any characteristic to collapse the polynomial hierarchy.
In fact, classical complexity problems are usually posed over alphabets described by $\{0, 1\}$, which can be viewed as a field of characteristic two.
It is thus not only interesting to study positive characteristics, but fundamentally, our strategy to approach this problem \emph{should} be independent of the underlying field.

Interestingly, the issue of positive characteristic also shows up in the field of proof complexity, whose focus is on developing lower bounds for the size of ``proofs'' of logical statements in various proof systems.
Specifically, one major proof systems studied is known as the $\text{AC}^0[p]$-Frege proof systems, which are bounded-depth proof systems with $\operatorname{MOD}_p$ gates, where a major open problem is to find superpolynomial lower bounds for this system.
The theory of proof complexity was brought into the world of algebraic complexity through the introduction of the Ideal Proof System (IPS), which defines its proofs using algebraic circuits.
In the influential work \cite{GP18}, it was shown that superpolynomial lower bounds for IPS over fields of characteristic $p \ne 0$ give superpolynomial $\text{AC}^0[p]$-Frege lower bounds.
They further showed that superpolynomial circuit lower bounds lead to superpolynomial IPS lower bounds, so finding such lower bounds over fields of nonzero characteristic can lead to solving this open problem in proof complexity.

Another field containing interesting applications comes from polynomial identity testing (PIT).
We typically define the problem of PIT by, given a polynomial belonging to some class of polynomials in a predefined representation, can you write a \emph{deterministic} algorithm to determine if it is identically the zero polynomial.
One major line of research in algebraic complexity is finding polynomial-time PIT algorithms for interesting subclasses of polynomials.
In the search of such algorithms, the relationship between lower bounds and PIT algorithms has been uncovered, revealing that lower bounds often lead to PIT algorithms.
For instance, in \cite{KI04}, it was shown that exponential lower bounds for the permanent would lead to polynomial-time PIT algorithms for polynomial-sized circuits.
Unfortunately, if the lower bounds discovered only apply for fields of characteristic zero, then so will the PIT algorithms.
If we would like our PIT algorithms to work over any field, we also need our lower bound strategies to work in these cases.

Recently, there has been a theme of extending lower bound results from fields of characteristic zero to fields of positive characteristic.
This trend can be seen in papers such as \cite{And20}, \cite{DIKMNT24}, \cite{For24}, \cite{BLRS25}, and \cite{BKRRSS25}.

\subparagraph{Border complexity}
For some of our results, we will focus on studying the approximate complexity of computing polynomials, known as \emph{border complexity}.
To define border complexity, we introduce a new variable $\epsilon$ and extend our underlying field $\FF$ to the field of fractions $\FF(\epsilon)$.
Then, instead of requiring our computational model to compute a polynomial $f$ exactly, it must compute some $\epsilon^Nf(x) + \epsilon^{N+1}Q(x, \epsilon)$ for some $N \in \NN$ and $Q \in \FF[x, \epsilon]$.
The idea of border complexity is well-studied through the geometric complexity theory program, which hopes to use techniques from algebraic geometry and representation theory to give border complexity lower bounds.
Perhaps the most fundamentally interesting reason to study border complexity is that most of our lower bound techniques also work in the world of border complexity, meaning that our techniques do not separate the cases of exact and approximate computation.
In spite of this, we know that the border setting can add a lot of power; for example, it was shown in \cite{Kum20} that, in fields of characteristic zero, all polynomials can be approximated by the sum of two (very large) products of affine forms, whereas exact computation can require $\Omega(n)$ lower bounds.
Thus, a deeper understanding of the power difference between exact and approximate computation is fundamental to the study of algebraic complexity.

\subsection{Results}
We have considered computation of the elementary symmetric polynomials using formulas over fields of positive characteristic.
By the Ben-Or construction, we know that there is a $\Sigma\Pi\Sigma$ circuit computing $\esym{n}{d}$ of size $O(n^2)$ over infinite fields (independent of characteristic).
In this paper, we extend the results of \cite{CKSV22} and show that their results on fields of zero characteristic extend to fields of positive characteristic, namely that this upper bound is tight for formulas for certain elementary symmetric polynomials.
\begin{theorem}\label{thm:alg_formulas}
For an arbitrary field, any algebraic formula computing $\esym{n}{d}$ has size $\Omega(d(n-d))$.
\end{theorem}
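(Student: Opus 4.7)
The plan is to adapt the proof of \cite{CKSV22}, which established this lower bound over fields of characteristic zero. I expect their argument to factor into (a) a reduction step that bounds the formula size of an arbitrary $f$ from below by some algebraic or geometric invariant of $f$ (a partial-derivative dimension, singular-locus dimension, or related measure), and (b) a computation of that invariant for $f = e_d^n$.

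For step (a), I would verify that each component of the reduction is characteristic-insensitive. Standard ingredients of formula lower bound proofs, namely homogenization, normalization, splitting gates by size, and random restrictions, are essentially combinatorial and transfer directly to arbitrary fields. The only likely obstruction is a random-restriction step that requires inverses of small integers; this can be handled by extending $\FF$ by transcendentals, or by choosing restrictions generically from a sufficiently large subset of $\FF$ so that no non-zero polynomial in the analysis degenerates.

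For step (b), I would reprove the invariant bound for $e_d^n$ in positive characteristic. The key new input is the bound $\dim(\sing(e_d^n)) \le d-1$ established in the later part of this paper, which replaces the characteristic-zero equality $\dim(\sing(e_d^n)) = d-2$ of Meckler, Zaimi and Limaye, Mittal, Pareek used by \cite{CKSV22}. Since these differ only by an additive constant, the resulting formula lower bound remains $\Omega(d(n-d))$, with only the hidden constant potentially degrading. Where characteristic-zero identities on $e_d^n$ (for instance, manipulations via Euler's identity $\sum_i x_i \partial_i e_d^n = d \cdot e_d^n$) are invoked, I would replace them with the characteristic-free recurrence $e_d^n = x_n \cdot e_{d-1}^{n-1} + e_d^{n-1}$ and with direct arguments on sub-sums of monomials.

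The main obstacle I anticipate is verifying that the \cite{CKSV22} argument is not quantitatively tight in a way that breaks when $\dim(\sing(e_d^n))$ jumps by one in the bad cases $\chr(\FF) \mid d$. If the original proof uses the precise irreducible structure of $\sing(e_d^n)$ rather than only its dimension, then a finer geometric description of the singular variety in positive characteristic, beyond the dimension bound, will be required in order to push the reduction through while preserving the tight $\Omega(d(n-d))$ rate.
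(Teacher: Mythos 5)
Your proposal matches the paper's approach: it reuses the characteristic-free reduction of Lemma \ref{lem:formula_lower_bound} (which gives $s \ge \tfrac{d}{6}(n - \dim V_2(f))$) and substitutes the new upper bound $\dim V_2(e_d^n) \le d-1$ of Lemma \ref{lem:esp_sing_nonzero} for the characteristic-zero value $d-2$, exactly as the paper does. One small correction: the \cite{CKSV22} reduction proceeds not by random restrictions but by iteratively peeling off subformulas of moderate formal degree and then applying Lemma \ref{lem:kumar_lemma}, so there are no inverses of small integers to worry about, and since the reduction depends only on $\dim V_2(e_d^n)$ and not on the irreducible structure of that variety, the obstacle you anticipate does not in fact arise.
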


In the second part of the paper, we show that the results of \cite{Shp02} and \cite{Kum20} do not extend to fields of positive characteristic.
First, we show that the symmetric model is not universal over such fields.
In fact, we show something stronger; we show that there are some polynomials that cannot be represented by $k$-linear combinations of projections of the symmetric model, even if we extend to the border setting.
\begin{theorem}\label{thm:esm_rep}
If $\operatorname{char}(\FF) \ne 0$, then, for every $n \in \NN$, there is a homogeneous polynomial $f \in \FF[x_1, \ldots, x_n]$ of degree $d$ such that, for every linear $L_1^{(1)}, \ldots, L_{m_1}^{(1)}, L_1^{(2)}, \ldots, L_{m_k}^{(k)} \in \FF[x_1, \ldots, x_n]$ and constants $c_1, \ldots, c_k \in \FF$,
\begin{equation}\label{eq:esp_sum_form}
f \ne \sum_{i=1}^kc_i\esym{m_i}{d}(L_1^{(i)}, \ldots, L_{m_i}^{(i)}),
\end{equation}
for $k = o(n)$.
Further, this applies in the border setting.
\end{theorem}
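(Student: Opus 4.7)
The plan is to produce an explicit nonzero \emph{metapolynomial} $P$ on the space $\FF[x_1,\ldots,x_n]_d$ of degree-$d$ homogeneous forms -- a nonzero polynomial in the coefficients of $f$ -- such that $P$ vanishes identically on every polynomial of the form (\ref{eq:esp_sum_form}). Because the zero locus of $P$ is Zariski-closed, the same $P$ automatically witnesses non-representability in the border setting, as the border closure of such polynomials is contained in $\{P = 0\}$. The theorem then reduces to producing any single $f$ with $P(f) \ne 0$, which follows once $P$ is nontrivial.

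I would first reduce to a dimension estimate. Padding by zero linear forms shows that $e_d(L_1,\ldots,L_m,0) = e_d(L_1,\ldots,L_m)$, so the image of the parameterization $(L_1,\ldots,L_m) \mapsto e_d(L_1,\ldots,L_m)$ only grows with $m$, and since $\FF[x_1,\ldots,x_n]_d$ is finite-dimensional, its Zariski closure stabilizes at some closed subvariety $W$. The set $V_k$ of polynomials of the form (\ref{eq:esp_sum_form}) is then contained in the image of $\FF^k \times W^k \to \FF[x_1,\ldots,x_n]_d$ sending $(c_i, w_i) \mapsto \sum_i c_i w_i$, so $\dim \overline{V_k} \le k(1 + \dim W)$. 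Showing $k(1+\dim W) < \binom{n+d-1}{d}$ for an appropriate choice of $d$ (depending on $n$, $k$, and $\chr(\FF)$) forces $\overline{V_k}$ to be a proper subvariety of $\FF[x_1,\ldots,x_n]_d$, whence both $P$ and the witness $f$ exist.

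The crux of the argument is the dimension bound on $W$. At a generic point $e_d(L) \in W$, the tangent space is the image of the differential of the parameterization: by the chain rule, a perturbation $\delta L = (\delta L_1,\ldots,\delta L_m)$ induces the tangent vector $\sum_i e_{d-1}(L_1,\ldots,\hat{L}_i,\ldots,L_m)\cdot \delta L_i$, so the tangent space is the degree-$d$ component of the ideal $I_L = (e_{d-1}(\hat{L}_1),\ldots,e_{d-1}(\hat{L}_m)) \subseteq \FF[x_1,\ldots,x_n]$. Over characteristic zero, Shpilka's universality forces $(I_L)_d = \FF[x_1,\ldots,x_n]_d$ for generic $L$; over characteristic $p$, I would argue that for suitable $d$ this fails universally in $L$. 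The expected source of the obstruction is positive-characteristic identities such as the Frobenius relation $p_d(L) = p_1(L)^d$ when $p \mid d$ and the accompanying degeneracy of Newton's identity $d\cdot e_d = \sum_{i}(-1)^{i-1} p_i e_{d-i}$, which become relations on the symmetric-function side that should lift to a universal constraint on the coefficients of $e_d(L)$.

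\paragraph{Main obstacle.} The hardest step is extracting a concrete universal obstruction on $(I_L)_d$ that holds for every $L$ and translating it into an explicit metapolynomial on $\FF[x_1,\ldots,x_n]_d$. The Newton and Frobenius identities give relations among the symbolic quantities $e_i(L)$ and $p_i(L)$, but one still has to convert these into a constraint on the coefficients of $e_d(L)$ viewed as an element of $\FF[x_1,\ldots,x_n]_d$. Once this obstruction is in hand, the Minkowski-sum dimension bound handles $k$-fold sums automatically, and the border statement is immediate from the Zariski-closedness of $\{P = 0\}$.
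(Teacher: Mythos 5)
Your proposal takes a genuinely different route from the paper, but it contains a gap that is not merely a matter of detail. The paper's proof is constructive: Newton's identities (which degenerate precisely at degree $p+1$ in characteristic $p$) are used to show that every member of $\mathrm{Sym}$ decomposes as a bounded sum of reducible forms plus $(p+1)$-st powers of linear forms (Claim~\ref{clm:esr_to_reduce}), after which an explicit metapolynomial $\sum_{\mathcal{I} \in \mathcal{P}_{p+1}[n]}\prod_{I \in \mathcal{I}}c_I$ is shown to vanish on all such decompositions by a combinatorial counting argument (the coefficient of every term is a product of factorials $n_1!\cdots n_k!$, one of which must be $\ge p!$ when $\ell > k(p-1)$). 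Your route instead aims at a non-constructive dimension count, asking whether $\overline{V_k}$ has dimension strictly below $\binom{n+d-1}{d}$.

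The concrete gap is in the tangent-space step. You compute the image of the differential of $(L_1,\ldots,L_m)\mapsto e_d(L)$ and claim that showing $(I_L)_d \subsetneq \FF[x_1,\ldots,x_n]_d$ ``universally in $L$'' would bound $\dim W$. But in characteristic $p$ the rank of the differential only gives a \emph{lower} bound on $\dim W$ at a general point, not an upper bound: the parameterization may be generically inseparable, in which case $\operatorname{rank}(d\phi_L) < \dim W$ even generically (the Frobenius map $L\mapsto L^p$ has differential zero but one-dimensional image). So establishing that the tangent image is a proper subspace does not show that $W$ is a proper subvariety; you would need a separate argument for separability or an entirely different method to \emph{upper}-bound $\dim W$. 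Moreover, even if $\dim W < \binom{n+d-1}{d}$ were established, the Minkowski-sum bound $\dim\overline{V_k} \le k(1+\dim W)$ only beats the ambient dimension if $\dim W$ is smaller by at least a factor of $k$, so a bare properness of $W$ is far from enough; the paper circumvents this entirely by finding a single polynomial constraint that survives taking $k$-fold sums, because of the modular-arithmetic vanishing. You correctly flag the ``extraction of a concrete obstruction'' as the hardest step, and it is indeed the step that the paper supplies and your sketch does not; as written, the proposal does not yet constitute a proof.
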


As we have previously mentioned, the symmetric model is intimately related to the models studied in \cite{Kum20}, namely the $\overline{\Sigma^{[k]}\Pi\Sigma}$ model.
Through this, as an immediate corollary of Theorem \ref{thm:esm_rep}, we show that the results of \cite{Kum20} also do not extend to fields of positive characteristic.
In fact, we prove, under such bounds, $\Omega(n)$ lower bounds for the border affine Chow rank of certain polynomials.
\begin{theorem}\label{thm:main}
If $\operatorname{char}(\FF) \ne 0$, then there exists a polynomial in at most $n$ variables that is not in $\overline{\Sigma^{[k]}\Pi\Sigma}$ for $k = o(n)$.
\end{theorem}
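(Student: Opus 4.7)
The plan is to observe that Theorem~\ref{thm:main} reduces directly to Theorem~\ref{thm:esm_rep}, via the identity $\prod_{j=1}^{m} L_j = e_m(L_1, \ldots, L_m)$ together with a homogeneous-slice and border-scaling argument that converts every $\overline{\Sigma^{[k]}\Pi\Sigma}$ computation of a degree-$d$ polynomial into a border sum of $k$ linear projections of $e_d$. Once this reduction is in place, the witness polynomial promised by Theorem~\ref{thm:esm_rep} is automatically also a witness for Theorem~\ref{thm:main}.

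Concretely, given a border computation $f \simeq \sum_{i=1}^{k}\prod_{j=1}^{m_i} L_{i,j}$ with each $L_{i,j}$ an affine linear form over $\FF(\epsilon)$, I would write $L_{i,j} = a_{i,j} + \ell_{i,j}$ and extract the degree-$d$ homogeneous component of each product summand (assuming $f$ is itself homogeneous of degree $d$, as we may). When all $a_{i,j}$ are nonzero, a direct expansion shows this component equals $(\prod_j a_{i,j}) \cdot e_d(\ell_{i,1}/a_{i,1}, \ldots, \ell_{i,m_i}/a_{i,m_i})$, which is already a scaled linear projection of $e_d$. In the degenerate case where $r$ of the $a_{i,j}$ vanish so that the component takes the form $\ell_{i,1}\cdots \ell_{i,r}\cdot e_{d-r}(\ell_{i,r+1}/a_{i,r+1},\ldots)$, I would absorb the degeneracy using a second border parameter $\delta$: a direct check shows $\delta^{-(d-r)}\,e_d(\ell_{i,1},\ldots,\ell_{i,r},\delta\ell_{i,r+1},\ldots,\delta\ell_{i,m_i})$ has exactly that degenerate product as its $\delta\to 0$ leading term. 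Composing the two border parameters by substituting $\delta = \epsilon^{N}$ for sufficiently large $N$ then expresses $f$ as a single border limit of a sum of $k$ linear projections of $e_d$.

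Invoking Theorem~\ref{thm:esm_rep}, which produces, for each fixed $n$ and each $k = o(n)$ in the range its proof accommodates, a homogeneous polynomial $f$ that evades every such representation even in the border setting, immediately yields a homogeneous polynomial in at most $n$ variables outside $\overline{\Sigma^{[k]}\Pi\Sigma}$. The explicit metapolynomial obstruction used in Theorem~\ref{thm:esm_rep} is crucial here because it is by definition a Zariski-closed condition on the coefficients of the target polynomial, so non-membership is automatically preserved under topological closures.

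The one delicate point I anticipate is the double-limit bookkeeping in the degenerate case: one must verify that the substitution $\delta = \epsilon^{N}$ truly composes the two border limits into a single $\overline{\Sigma^{[k]}\Pi\Sigma}$-style border expression, without inflating the top fan-in beyond $k$ and while leaving every error term absorbed into the border. This appears routine — one picks $N$ larger than the maximum $\epsilon$-degree produced by the first border step, and the scaled $e_d$-projections combine additively without merging summands — but it is the only step in the reduction where one must be careful rather than purely formal.
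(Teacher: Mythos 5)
Your proof is correct and follows the same high-level strategy as the paper: reduce $\overline{\Sigma^{[k]}\Pi\Sigma}$ to $\overline{\Sigma^{[k]}\text{Sym}}$ (Theorem~\ref{thm:depth_three_to_sym}) by extracting the degree-$d$ homogeneous slice of each product gate, and then invoke Theorem~\ref{thm:esm_rep}. The one place you diverge is in handling product gates with some constant-free affine factors. The paper dispatches this with a short perturbation lemma (the final lemma of Section~\ref{sec:depth_three_to_sym}): for a constant-free factor $\ell$ one can choose $\alpha\in\FF(\epsilon)$ so that replacing $\ell$ by $\ell+\alpha$ preserves the $\epsilon$-approximation of $f$, after which every factor has nonzero constant term and the generic slice argument applies uniformly. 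You instead keep the degenerate factors and approximate the resulting $\ell_1\cdots\ell_r\,e_{d-r}(\cdot)$ directly as a border $e_d$-projection, using the splitting identity $e_d(\ell_1,\ldots,\ell_r,\delta\ell'_{r+1},\ldots)=\sum_{s\le r}\delta^{d-s}e_s(\ell_1,\ldots,\ell_r)e_{d-s}(\ell'_{r+1},\ldots)$ and a second degeneration parameter $\delta$, then compose limits by substituting $\delta=\epsilon^N$. Both routes work; the paper's avoids double-limit bookkeeping and the negative power $\delta^{-(d-r)}$ at the cost of an extra lemma, whereas yours is slightly more hands-on but requires exactly the order-of-limits care you correctly flag at the end.
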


\subsection{Proof techniques}

\paragraph{Proof of Theorem 1.1} We will first focus on proving Theorem \ref{thm:alg_formulas} in Section \ref{sec:order_two_zeros}.
This proof will largely rely on the proof for the case of characteristic zero from \cite{CKSV22} but with a modification to the step that relies on the field characteristic.
Namely, the proof revolves around studying what we will call the set of order-two zeros of a polynomial.

Algebraic complexity theory focuses on using the algebraic properties of polynomials to split them into classes based on how ``hard" they are to compute.
One such property that has been utilized (see \cite{vzG87}, \cite{Kum19}, \cite{CKSV22}, \cite{KV22}, \cite{ABV17}) is what we will call the \emph{order-$2$ zero space}.
Given a polynomial $f \in \FF[x_1, \ldots, x_n]$, we define its order-$2$ zero space, denoted $V_2(f)$, to be the points where it vanishes along with its first-order partial derivatives, i.e.,
\begin{equation}\label{eq:order_two_zero}
V_2(f) = V\left(f, \frac{\partial f}{\partial x_1}, \ldots, \frac{\partial f}{\partial x_n}\right) = \left\{a \in \FF^n \mid f(a) = \frac{\partial f}{\partial x_1}(a) = \dots = \frac{\partial f}{\partial x_n}(a) = 0\right\},
\end{equation}
where $V(f_1, \ldots, f_\ell)$ denotes the affine variety defining the zero set of some polynomials $f_1, \ldots, f_\ell$.

\begin{remark}
This idea of the order-$2$ zero space is utilized in \cite{Kum19} and \cite{CKSV22} without a name.
Then, in \cite{KV22}, it is introduced as the ``singular locus" of a polynomial, denoted $\operatorname{sing}(f)$, which is a well-known object in algebraic geometry, in a nod to the notation of \cite{vzG87}.
For this paper, we have decided to change this notation, as this definition of singular locus does not precisely align with the algebraic geometric definition.
Although it is true that, when $f$ is square-free, this definition exactly characterizes the singular locus ($V_2(f) = \operatorname{sing}(f)$), this is not true in general.
This follows from the fact that the singular locus of a polynomial is a property of its corresponding hypersurface, a purely geometric object, while the order-$2$ zero space is a property of the polynomial, itself.
Specifically, if we consider the power polynomial $p_d^n = x_1^d + \dots + x_n^d$, it is well-known that $\sing (p_d^n) = \{0\}$ (see Example 10.21 of \cite{Gat21}), but, if we consider $\operatorname{char}(\FF) = q \ne 0$, we trivially observe that $V_2(p_q^n) = V(p_q^n)$.
We find that letting this definition differ from its natural geometric definition is confusing, so we have, therefore, decided to change it.
\end{remark}

In many of these recent results, this geometric object appears due to an important lemma that relates its dimension to the size of a product decomposition of the polynomial.
\begin{lemma}[See Lemma 1.7 of \cite{Kum19} and Lemma 3.4 of \cite{CKSV22}]\label{lem:kumar_lemma}
Suppose $\FF$ is algebraically closed.
Let $f \in \FF[x_1, \ldots, x_n]$ be homogeneous of degree $d$. If there are constant-free polynomials $f_1, g_1, \ldots, f_k, g_k \in \FF[x_1, \ldots, x_n]$ and a polynomial $h \in \FF[x_1, \ldots, x_n]$ such that $\deg(h) < d$ and
\[
f = \sum_{i=1}^kf_ig_i + h,
\]
then $\dim V_2(f) \ge n - 2k$.
\end{lemma}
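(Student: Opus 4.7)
My plan is to exhibit an explicit subvariety $W$ of $V_2(f)$ with $\dim W \ge n - 2k$. The natural candidate is
\[
W := V(f_1, g_1, f_2, g_2, \ldots, f_k, g_k) \subseteq \FF^n.
\]
Because each $f_i, g_i$ is constant-free, the origin lies in $W$, so $W$ is nonempty; Krull's generalized principal ideal theorem applied to the $2k$ generators then gives $\dim W \ge n - 2k$.

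The central step is to show $W \subseteq V_2(f)$. For any $a \in W$, the product rule gives
\[
f(a) = \sum_{i=1}^k f_i(a)\, g_i(a) + h(a) = h(a), \qquad \partial_j f(a) = \sum_{i=1}^k \bigl(\partial_j f_i(a)\, g_i(a) + f_i(a)\, \partial_j g_i(a)\bigr) + \partial_j h(a) = \partial_j h(a),
\]
so the inclusion $W \subseteq V_2(f)$ reduces to verifying that $h$ and all $\partial_j h$ vanish on $W$. When $h \equiv 0$ this is immediate, and combining with $\dim W \ge n - 2k$ finishes the proof.

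The main obstacle is the general case $h \ne 0$, where a priori $h \notin I(W)$. The leverage comes from combining the homogeneity of $f$ with the bound $\deg h < d$. Matching homogeneous components of both sides of $f = \sum f_i g_i + h$, the degree-$d$ part reads $f = \sum_i \sum_{a+b=d,\, a,b\ge 1} (f_i)_a (g_i)_b$, while each lower-degree piece $h_r$ with $r < d$ is forced to equal a specific combination of products $(f_i)_a (g_i)_b$ with $a+b = r$; in particular $h(0) = 0$ and $\partial_j h(0) = 0$, so $h$ vanishes at the origin to order at least two. A careful structural argument exploiting this description — for instance by passing to an irreducible component of $W$ through the origin of dimension $\ge n-2k$ and using that on a cone a homogeneous polynomial of degree $d$ and a polynomial of strictly smaller degree can agree only if both (and their partials) are identically zero — allows us to locate a subvariety of $W$ of dimension at least $n - 2k$ contained in $V_2(f)$, yielding the desired bound.
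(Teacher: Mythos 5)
Your opening steps are fine: $W = V(f_1, g_1, \ldots, f_k, g_k)$ contains the origin because the $f_i, g_i$ are constant-free, and Krull's height theorem gives $\dim W \ge n - 2k$. The gap is in the central claim $W \subseteq V_2(f)$: this is simply false in general, and no "careful structural argument" will rescue it. Take $n = 3$, $k = 1$, $f_1 = x_1$, $g_1 = x_2 + x_3^2$, so that $f = x_1x_3^2$ is homogeneous of degree $3$ and $h = -x_1x_2$ has degree $2 < 3$. Here $W = \{(0, -t^2, t) : t \in \FF\}$ is irreducible of dimension $1 = n - 2k$ and passes through the origin, yet at $a = (0,-1,1) \in W$ we have $\partial f/\partial x_1(a) = x_3^2\vert_a = 1 \ne 0$, so $a \notin V_2(f)$. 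This also kills your proposed repair: the component of $W$ through the origin need not be a cone, because the $f_i, g_i$ are only constant-free, not homogeneous, so the "homogeneous polynomial agreeing with a lower-degree one on a cone" principle has nothing to grip. (In this example the lemma's conclusion still holds --- $V_2(f) = V(x_3)$ has dimension $2$ --- but not via your inclusion.)

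The paper routes around this entirely differently. The trivial inclusion it uses is $W \subseteq V_2(f - h) = V_2\left(\sum_i f_ig_i\right)$, which really is immediate from the product rule since every $f_i$ and $g_i$ vanishes identically on $W$. The nontrivial ingredient is then a separate comparison lemma (Lemma 5.8 of \cite{CKSV22}): for $f$ homogeneous of degree $d$ and $\deg h < d$, one has $\dim V_2(f) \ge \dim V_2(f - h)$, i.e.\ passing to the top homogeneous part can only increase the dimension of the order-$2$ zero locus. That is exactly the step your write-up is missing, and it cannot be replaced by a pointwise containment argument; you would need to prove (or cite) some version of that leading-form statement to complete the proof.
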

\begin{proof}
The proof follows from the following inequality,
\[
\dim V_2(f) \ge \dim V_2(f - h) = \dim V_2\left(\sum_{i=1}^kf_ig_i\right) \ge \dim V(f_1, g_1, \ldots, f_k, g_k) \ge n - 2k,
\]
where the first inequality comes from Lemma 5.8 of \cite{CKSV22} and the last inequality is a basic fact from algebraic geometry.
\end{proof}
\begin{note}
Notice that the hypothesis of algebraic closure is necessary for the previous lemma.
For example, consider $f = (x_1^2 + \dots + x_n^2)^2$.
We notice that $V_2(f) = V(f) = \{0\}$ over $\RR$, but we get a lower bound of $\dim V_2(f) \ge n - 2$ over $\CC$.
\end{note}

The key observation for the proof in \cite{CKSV22} was that small formulas computed polynomials with ``many" order-$2$ zeros.
Then, they prove an upper bound of the dimension of the set of order-$2$ zeros of the elementary symmetric polynomials and utilize this to show an $\Omega(d(n-d))$ lower bounds on formulas that compute them over fields of zero characteristic (with $n$ being the number of variables and $d$ being the degree).
For certain selections of $n$ and $d$, this bound is tight, as the Ben-Or construction shows an $O(n^2)$ upper bound on the size of $\Sigma\Pi\Sigma$ formulas computing $\esym{n}{d}$.
This was known to be tight for $\Sigma\Pi\Sigma$ formulas over fields of characteristic zero from \cite{Shp02}.

\paragraph{Tight bound on the order-$2$ space of the elementary symmetric polynomials} It was shown in \cite{MZ17}, \cite{LMP19}, and \cite{CKSV22} that  $\dim V_2(\esym{n}{d}) = d-2$ over fields of characteristic zero, where dimension is, of course, defined in terms of affine varieties.
No such equivalent statement was known (to the author's knowledge) for fields of positive characteristic.
In this paper, we show that the dimension is slightly different over such fields; specifically, it can vary between being $d-2$ or $d-1$.
\begin{lemma}\label{lem:order_two_esp}
For $\operatorname{char}(\FF) = p$, we have that $d-2 \le \dim V_2(\esym{n}{d}) \le d-1$. If we have that
\begin{itemize}
    \item $p \ne 0$,
    \item $n - d + 1 \equiv 0 \mod p^{\lceil \log_p{d}\rceil}$, and
    \item $n \ge 2d - 1$,
\end{itemize}
then $\dim V_2(\esym{n}{d}) = d-1$.
On the other hand, if we have that
\begin{itemize}
    \item $p = 0$,
    \item $n - d + 1 \equiv 0 \mod p$, or
    \item $n < 2d - 1$,
\end{itemize}
then $\dim V_2(\esym{n}{d}) = d-2$.
\end{lemma}
Then, for the rest of the proof in \cite{CKSV22} to work, we need only that $\dim V_2(\esym{n}{d}) \le d-1$, so this lemma suffices for proving the main result.

Interestingly, Lemma \ref{lem:order_two_esp} allows us to also extend the lower bound techniques in \cite{Shp02} used to show that $\esym{n}{d}$ require $\Sigma\Pi\Sigma$ formulas of size $\Omega(d(n-d))$ to fields of positive characteristics\footnote{Our results, of course, immediately imply this lower bound, but it is still of interest that we can also extend the results of their main technical lemma}.
In \cite{Shp02}, they show that, over fields of characteristic zero, for every vector space $V \subseteq \FF^n$ such that $\esym{n}{d}$ vanishes on $V$, we have that $\dim(V) < \frac{n+d}{2}$.
For fields of positive characteristic, observe that if we combine Proposition 6 from \cite{GGIL22} with Lemma \ref{lem:order_two_esp}, we conclude that we have the upper bound of $\dim(V) \le \frac{n+d-1}{2}$\footnote{Using the fact that we can represent a polynomial as a ``strength" decomposition of size $\operatorname{codim}(V)$}.

\paragraph{Proof of Theorem 1.3} We will now turn our attention to the proof of Theorem \ref{thm:esm_rep} (and hence Theorem \ref{thm:main} by extension), which we prove in Section \ref{sec:symmetric}.
Consider $\operatorname{char}(\FF) = p \ne 0$.
From a basic application of (\ref{eq:prod_to_esp}), we will show that Theorem \ref{thm:esm_rep} easily implies Theorem \ref{thm:main}.
We then show that a polynomial that can be written in the form given by (\ref{eq:esp_sum_form}) can be rewritten as
\[
f = \sum_{i=1}^\ell g_ih_i + \sum_{i=1}^r L_i^d,
\]
where $g_i, h_i$ are homogeneous, $L_i$ are linear, $d$ is the degree of $f$, and $\ell = O(p)$.
Then, our goal will be to carefully select a value of $f$ that cannot be written in this form.

We will now give a brief explanation to why it makes sense that such an $f$ exists.
Suppose that we select an $f$ that is set-multilinear, meaning that we can split the variables into $d$ groups where each monomial consists of exactly one variable from each group.
Then, we can take the set-multilinear part of the right-hand side of the equation, which means, namely, that we can ignore the $\sum_{i=1}^rL_i^d$ part of the expression due to the fact that $(a + b)^p = a^p + b^p$ (assuming that $d \ge p$).
Then, we can split each $g_i, h_i$ based on subsets of the set-multilinear groups each monomial falls in, so we can write it as some form of
\[
f = \sum_{i=1}^{\ell \cdot 2^d}g_i'h_i'.
\]
Then, we clearly have a product decomposition, so we can use Lemma \ref{lem:kumar_lemma} to conclude that $\dim V_2(f) \ge n - \ell \cdot 2^{d+1}$.
Then, if we select $d$ as a constant, we conclude that $\ell \ge \Omega(n - \dim V_2(f))$.
We should observe that this argument only works in the non-border case, and it would require a careful argument to show that it works in the border setting.

While this explanation can provide some intuition behind the claim, the proof does not follow this argument.
Instead, we analyze the coefficients of $f$ and construct a polynomial that ``witnesses" the condition.
Specifically, it is used to show that, if a certain set of multilinear monomials have nonzero coefficients in $f$, then there must be another multilinear monomial that is nonzero.
This approach is preferable to the one described above, as it is then obvious why this applies in the border setting.
Recall that a property is called ``closed," meaning it being satisfied in the non-border setting implies that it is satisfied in the border setting, if there is a metapolynomial, meaning a polynomial whose variables are seen as the coefficients of an input polynomial, that evaluates to zero if and only if the input polynomial satisfies the property.
It is thus often advantageous to explicitly give such a metapolynomial when showing that a property is closed.

\begin{note}
In the proofs of some of the theorems, we will sometimes require our field to be algebraically closed.
Because the main results from this paper are lower bounds, these results also apply to fields that are not algebraically closed fields, as every field can be extended into one that is algebraically closed.
\end{note}

\subsection{Prior Work}
The earliest-known motivation for studying the complexity of the elementary symmetric polynomials came from boolean circuit complexity, where circuits compute boolean functions using a graph labeled with and-gates, or-gates, and not-gates.
When considering boolean formulas of constant depth, it was shown in \cite{FSS84} that the majority function had super-polynomial lower bounds, where the majority function is given by $\bigvee_{I \in {[n] \choose n/2}}\bigwedge_{i \in I}x_i$.
Clearly, $\esym{n}{n/2}$ looks like an algebraic analogue of the majority function, so it was believed to be a good candidate for a polynomial that would have constant-depth formulas with super-polynomial lower bounds.
Unfortunately, due to a construction of Ben-Or (see Theorem 3.1 of \cite{Shp02}), $\esym{n}{d}$ was shown to have $\Sigma\Pi\Sigma$ formulas of size $O(n^2)$ (but the problem of super-polynomial $\Sigma\Pi\Sigma$ lower bounds has since been solved, see \cite{LST22}).
In \cite{SW01} and \cite{Shp02}, it was shown that this upper bound is tight for $\Sigma\Pi\Sigma$ formulas of certain elementary symmetric polynomials over fields of characteristic zero.
This was further extended in \cite{CKSV22} to be tight over general algebraic formulas, also over fields of characteristic zero.
In the more restrictive model of homogeneous multilinear formulas, where each node computes a homogeneous, multilinear polynomial, \cite{HY11} found super-polynomial lower bounds for certain elementary symmetric polynomials (the result of which is independent of field characteristic).

In \cite{Shp02}, the elementary symmetric polynomials were used to define an algebraic computational model, which was called the \emph{symmetric model}.
It was defined by taking a series of linear polynomials $L_1, \ldots, L_m \in \FF[x_1, \ldots, x_n]$ and a degree $d \le m$ and defining the computation by $\esym{m}{d}(L_1, \ldots, L_m)$.
\cite{Shp02} then showed that the symmetric model is universal over fields of characteristic zero, meaning that it can compute any (homogeneous) polynomial.
This was done using Fischer's identity (see \cite{Fis94} and \cite{Shp02}), which tells us that we can write any degree $d$ homogeneous polynomial $f \in \FF[x_1, \ldots, x_n]$ as $f = L_1^d + \dots + L_m^d$, for some linear polynomials $L_1, \ldots, L_m \in \FF[x_1, \ldots, x_n]$, and it was then shown that polynomials written in this form can be expressed in the symmetric model.
We observe that this only works over fields of characteristic zero, as there are many polynomials in positive characteristic that cannot be written in this way, as, for example, $(x+ y)^p = x^p + y^p$ when $\operatorname{char}(\FF) = p$.

Then, in \cite{Kum20}, the results around the symmetric model were used to study the border ``affine Chow rank"\footnote{This name was selected based on \cite{Dut25}} of a polynomial.
We say that the \emph{affine Chow rank} of a polynomial $f \in \FF[x_1, \ldots, x_n]$ is the smallest $k$ such that $f$ can be computed by a $\Sigma^{[k]}\Pi\Sigma$ circuit.
We mention that there are polynomials with affine Chow rank that is $\Omega(n)$ over any field (see Lemma 3.2 of \cite{CGJWX18}).
Recall that, given a complexity measure, we define its border complexity by extending the underlying field to $\FF(\epsilon)$, for some new variable $\epsilon$, and, instead of requiring our model to compute $f(x)$, we need only to compute $\epsilon^Nf + \epsilon^{N+1}F(x, \epsilon)$ for some $N \ge 0$ and $F \in \FF[x, \epsilon]$ (where we may sometimes write that $\epsilon^N \cdot f + \epsilon^{N+1} \cdot F(x, \epsilon) \simeq f$).
Through the results of the symmetric model, \cite{Kum20} showed that, over fields of characteristic zero, every homogeneous polynomial has border affine Chow rank of at most two.

\section{Order-$2$ zero space of the elementary symmetric polynomials}\label{sec:order_two_zeros}
In this section, we will focus on extending the results on the formula complexity of the elementary symmetric polynomials from \cite{CKSV22}; we will prove Theorem \ref{thm:alg_formulas} and Lemma \ref{lem:order_two_esp}.
Specifically, we will show that these results extend when we consider fields of positive characteristic, which we do by analyzing the order-$2$ zero set of the elementary symmetric polynomials, as this is the only part of the proof that uses the characteristic of the field.
Upon careful inspection of \cite{CKSV22}, we can state the main result we rely on in the following lemma. For the sake of completeness, we prove this lemma in the appendix in Section \ref{sec:formula_lower_bounds}.
\begin{lemma}[\cite{CKSV22}]\label{lem:formula_lower_bound}
Suppose $f \in \FF[x_1, \ldots, x_n]$ is a polynomial of degree $d \ge 3$ that can be computed by a formula $\Phi$ of size $s$. Then,
\[
s \ge \frac{d}{6}(n - \dim V_2(f))
\]
\end{lemma}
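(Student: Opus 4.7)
The plan is to reduce Lemma~\ref{lem:formula_lower_bound} to Lemma~\ref{lem:kumar_lemma} by extracting from any size-$s$ formula computing $f$ an explicit decomposition with a controlled number of product summands. Specifically, I will show that $f$ can be written as
\[
f = \sum_{i=1}^{k} g_i h_i + h,
\]
where every $g_i$ and $h_i$ is constant-free, $\deg(h) < d$, and $k \leq 3s/d$. Lemma~\ref{lem:kumar_lemma} then immediately gives $\dim V_2(f) \geq n - 2k$, and rearranging yields $s \geq dk/3 \geq (d/6)\bigl(n - \dim V_2(f)\bigr)$, as claimed.

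The decomposition itself is produced by structural induction on the formula. A leaf computes a variable or a constant, which has degree at most one and (using $d \geq 3$) is absorbed into $h$ with $k = 0$. At a sum gate $f = f_1 + f_2$, the decompositions of $f_1$ and $f_2$ given by the inductive hypothesis are concatenated and their remainders summed; since the sizes add, the bounds on the number of summands add too. At a product gate $f = f_1 \cdot f_2$ with both factors of positive degree, I invoke the identity
\[
f_1 f_2 = (f_1 - f_1(0))(f_2 - f_2(0)) + f_1(0)\, f_2 + f_2(0)\bigl(f_1 - f_1(0)\bigr),
\]
which expresses $f$ as a single product of two constant-free polynomials plus terms of degree strictly less than $d$ that are absorbed into $h$; if instead one factor is a constant, the gate is effectively a scalar and we recurse into the non-constant factor. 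Since any formula of size $s$ computes a polynomial of degree at most $s$, the single product contributed by the product case fits within the $3s/d$ budget.

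The main obstacle is executing this argument cleanly when a subformula computes a polynomial of degree larger than $d$, which can occur through cancellation at an ancestor sum gate. The standard way around this, as in \cite{CKSV22}, is to preprocess $\Phi$ by homogenizing, replacing each wire by a collection of wires carrying its homogeneous components up to degree $d$, so that every subformula thereafter computes a homogeneous polynomial and degrees behave monotonically along gates. This homogenization increases the size of the formula by a constant factor, and the factor $3$ in the bound $k \leq 3s/d$ is precisely to accommodate this overhead. A secondary subtlety is ensuring that the \emph{constant-free} property required by Lemma~\ref{lem:kumar_lemma} is preserved throughout; the identity above is designed to do exactly that, and at leaves the contribution to $h$ vanishes. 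Once the formula is preprocessed and the inductive step is carried out, the final bound follows by concatenation across the root path.
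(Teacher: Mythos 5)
Your high-level plan matches the paper's: extract a decomposition $f = \sum_{i=1}^k g_i h_i + h$ with $g_i, h_i$ constant-free, $\deg(h) < d$, and $k \leq 3s/d$, then feed it to Lemma~\ref{lem:kumar_lemma}. But the mechanism you use to produce the decomposition has a genuine gap, and it is not the mechanism the paper uses.

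The gap is the homogenization step. You assert that replacing each wire by its homogeneous components up to degree $d$ ``increases the size of the formula by a constant factor,'' and that the factor $3$ in $k \leq 3s/d$ is there to absorb this overhead. Neither claim is correct. Homogenizing a \emph{formula} (as opposed to a circuit) does not give a constant-factor blowup: each product gate must be expanded into a convolution of homogeneous components, and because subformulas in a formula cannot be shared, the resulting size is not $O(s)$, nor even $O(s\cdot\mathrm{poly}(d))$ in general for formulas of arbitrary depth. Since $d$ can be as large as $n$, you cannot treat this as a bounded overhead, and the resulting bound would not be $k = O(s/d)$. Without homogenization your top-down descent breaks, because a subformula can compute a polynomial of degree larger than $d$ (through cancellation at an ancestor sum gate), so the ``remainder'' terms $f_1(0)f_2$ and $f_2(0)(f_1 - f_1(0))$ you try to absorb into $h$ need not have degree $< d$.

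The paper sidesteps all of this by working with \emph{formal degree}, a purely syntactic quantity that is monotone along any leaf-to-root path and never requires rewriting the formula. Proposition~\ref{prop:formula_vertex_pick} locates a node whose subformula has formal degree in the window $[d/3, 2d/3)$; Proposition~\ref{prop:formal_dgree_reduce} iteratively peels such subformulas, writing $\Phi = h'\Phi_v + f$, splitting $h'$ and $\Phi_v$ into constant plus constant-free parts, extracting one product of constant-free polynomials per peel, and recursing on a strictly smaller formula. Each peeled subformula has size at least $d/3$ and the peels are disjoint, which is exactly where the factor $3$ comes from. If you want to keep your structural-induction framing, the repair is to track formal degree rather than actual degree, and to stop the descent not at product gates but at nodes whose formal degree first falls into a window of width $\Theta(d)$ as in the paper; this gives the disjointness and the size lower bound per summand that your argument needs without any preprocessing of $\Phi$.
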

Then, in \cite{CKSV22}, the proof is completed through the following claim.
\begin{claim}[\cite{MZ17}, Lemma 12 of \cite{LMP19}, Lemma 5.2 of \cite{CKSV22}]\label{clm:char_zero_esp_sing}
Let $\chr(\FF) = 0$. If $d \ge 2$ and $d \le n$, then $\dim V_2(\esym{n}{d}) = d - 2$.

In particular, we have that $V_2(\esym{n}{d}) = \bigcup_{I \in {[n] \choose d-2}}\{a \in \FF^n \mid \forall\ i \in I, a_i = 0\}$.
\end{claim}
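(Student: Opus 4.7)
The plan is to prove both inclusions of the displayed equality; the dimension statement then follows since each piece of the union is a coordinate subspace of dimension $d-2$. I read $V(x_i \mid i \in I)$ as the $(d-2)$-dimensional coordinate subspace carved out by fixing the complementary $n-d+2$ variables to zero, so that the dimensions match.

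For the easy containment $\bigcup_{I} V(x_i \mid i \in I) \subseteq V_2(e_d^n)$, a monomial count suffices: on such a subspace only $d-2$ of the $n$ coordinates can be nonzero, so every degree-$d$ monomial of $e_d^n$ (which needs $d$ distinct variables) is forced to involve a vanishing coordinate, and likewise every degree-$(d-1)$ monomial of $\partial e_d^n/\partial x_j = e_{d-1}^{n-1}(x \setminus x_j)$ (which needs $d-1 > d-2$ distinct variables) must vanish. This already delivers the lower bound $\dim V_2(e_d^n) \geq d-2$.

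For the reverse inclusion, I would fix $a = (a_1,\ldots,a_n) \in V_2(e_d^n)$ and, after reordering, assume $a_1,\ldots,a_k$ are the nonzero coordinates; the target is $k \leq d-2$. Setting $\alpha_m := e_m^k(a_1,\ldots,a_k)$, the hypothesis $a \in V_2(e_d^n)$ reads $\alpha_d = 0$ together with $e_{d-1}^{k-1}(\hat{a_i}) = 0$ for every $i \in [k]$. The main tools are the recursion $e_m^k = a_i\, e_{m-1}^{k-1}(\hat{a_i}) + e_m^{k-1}(\hat{a_i})$, the derivative identity $\partial e_d^n/\partial x_i = e_{d-1}^{n-1}(x \setminus x_i)$, and the counting identity $\sum_{i=1}^k e_m^{k-1}(\hat{a_i}) = (k-m)\alpha_m$. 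First I would apply the counting identity to the partial-derivative conditions to obtain $(n-d+1)\alpha_{d-1} = 0$, which forces $\alpha_{d-1} = 0$ in characteristic zero (since $d \leq n$). Then I would induct on $m \geq d-1$: assuming $\alpha_m = 0$ and $e_m^{k-1}(\hat{a_i}) = 0$ for all $i$, the recursion collapses to $\alpha_{m+1} = e_{m+1}^{k-1}(\hat{a_i})$, and summing over $i$ gives $k\alpha_{m+1} = (k-m-1)\alpha_{m+1}$, i.e.\ $(m+1)\alpha_{m+1} = 0$, so $\alpha_{m+1} = 0$ and therefore $e_{m+1}^{k-1}(\hat{a_i}) = 0$ as well.

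Pushing the induction up to $m = k$ produces $\alpha_k = a_1 \cdots a_k = 0$, contradicting the nonvanishing of the $a_i$ unless $k \leq d-2$; this places $a$ in some $V(x_i \mid i \in I)$ with $|I| = d-2$ and completes the reverse inclusion. The main obstacle is purely bookkeeping: one must verify that every integer divisor $m+1$ encountered along the induction, which ranges over the consecutive integers from $d$ to $k$, is nonzero in $\FF$. This is precisely where the characteristic zero hypothesis is indispensable, and exactly the place where the argument breaks down in positive characteristic, motivating the more delicate two-sided bound of Lemma \ref{lem:order_two_esp}.
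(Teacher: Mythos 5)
Your proof is correct, and it is essentially the argument from the original references \cite{MZ17}, \cite{LMP19}, \cite{CKSV22}: isolate the nonzero coordinates $a_1, \dots, a_k$, set $\alpha_m = e_m^k(a)$, and climb $\alpha_{d-1} = \alpha_d = \dots = \alpha_k = 0$ via the deletion recursion $e_{m}^k = a_i e_{m-1}^{k-1}(\hat a_i) + e_m^{k-1}(\hat a_i)$ and the counting identity $\sum_i e_m^{k-1}(\hat a_i) = (k-m)\alpha_m$, dividing by $m+1$ at each rung. One housekeeping note: as displayed, the claim's set-theoretic equality is dimensionally off, since $V(x_i \mid i \in I)$ for $|I| = d-2$ has dimension $n-d+2$; the intended union (which is what you prove) ranges over coordinate subspaces with at most $d-2$ free coordinates, i.e.\ $V(x_j \mid j \notin I)$ for $|I| = d-2$, and your reading resolves this correctly. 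The paper itself does not re-prove this claim but cites it, and instead proves the stronger unnumbered claim that $\dim V_2(e_d^n) = d-2$ whenever $p = 0$ or $n-d+1 \not\equiv 0 \bmod p$. That proof takes a genuinely different route: it applies $(n-d+1)e_{d-1}^n(a) = \sum_i \partial e_d^n/\partial x_i(a)$ exactly once to get $e_{d-1}^n(a) = 0$, drops a single degree to place the nonzero coordinates of $a$ in $V_2(e_{d-1}^{|I|})$, and then quotes the characteristic-free Lemma \ref{lem:esp_sing_nonzero}, which bounds $\dim V_2(e_{d-1}^m) \le d-2$ over any field. The reason for the detour is precisely the obstruction you flag at the end: your ladder divides by each of $d, d+1, \dots, k$, and since $k$ can be as large as $n$, this only survives into positive characteristic when $p > n$; the paper's one-step reduction plus the universal Lemma \ref{lem:esp_sing_nonzero} needs only the single integer $n-d+1$ to be invertible, which is much weaker and is exactly what the generalization records. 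Over $\chr(\FF) = 0$ both arguments work; the paper's factoring through the characteristic-free bound is what buys the positive-characteristic corollary.
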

One should notice that a formula complexity lower bound of $\Omega(d(n-d))$ on $\esym{n}{d}$ when $\chr(\FF) = 0$ immediately follows from this.
We will spend the rest of this section studying this result over fields of positive characteristic.

One may first ask whether the result from Claim \ref{clm:char_zero_esp_sing} can be extended to fields of positive characteristic.
From the following example, we can see that this is not the case.
\begin{example}
Consider the field $\mathbb{F}_2 = \{0, 1\}$ (or even its algebraic closure $\overline{\FF_2}$).
We consider the set $V_2(\esym{5}{2})$, and we will show that $\{(\alpha, \alpha, \alpha, \alpha, \alpha) \mid \alpha \in \overline{\FF_2}\} \subseteq V_2(\esym{5}{2})$, implying that $\dim V_2(\esym{5}{2}) \ge 1$.
To see this, consider an arbitrary $\alpha \in \overline{\FF_2}$, and notice that
\[
\esym{5}{2}(\alpha, \alpha, \alpha, \alpha, \alpha) = {5 \choose 2}\alpha^2 = 10\alpha^2 = 0,\ \frac{\partial \esym{5}{2}}{\partial x_i}(\alpha, \alpha, \alpha, \alpha, \alpha) = \esym{4}{1}(\alpha, \alpha, \alpha, \alpha) = 4\alpha = 0.
\]
\end{example}

Clearly, we cannot hope to get an upper bound of $d-2$ for arbitrary choices of field characteristic, number of variables, and degree.
But, if we instead turn our attention to an upper bound of $d-1$, this is possible.
Luckily, for the sake of asymptotic bounds, the difference between $d-1$ and $d-2$ is not important.

Before getting to the proof of this result, we believe that it is useful to provide some motivation for where the proof stems from.
Specifically, we consider the case of $V_2(\esym{n}{2})$ for some $n \ge 2$.
Observe that we can write each $\frac{\partial \esym{n}{2}}{\partial x_i}(a) = \sum_{j \ne i}a_j$.
Then, for $i \ne j$, $\sum_{k \ne i}a_k - \sum_{k \ne j}a_k = a_i - a_j$.
We thus conclude that, if $a \in V_2(\esym{n}{2})$, then $a_1 = \dots = a_n$.
As we repeat this strategy for increasing values of $d$, we observe that $a \in V_2(\esym{n}{d})$ if an only if the indices of $a$ can be partitioned into $d-1$ sets, where all indices in the same set correspond to the same value.
We will now formalize this approach.

\begin{lemma}\label{lem:esp_sing_nonzero}
Over any field, if $1 \le d \le n$, then $\dim V_2(\esym{n}{d}) \le d-1$.
\end{lemma}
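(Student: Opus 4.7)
The plan is to prove the stronger combinatorial statement that every $a \in V_2(e_d^n)$ has at most $d-1$ distinct coordinate values. Granting this, $V_2(e_d^n)$ is contained in the finite union, taken over partitions $P$ of $[n]$ into at most $d-1$ blocks, of the linear subspaces
\[
L_P = \{a \in \overline{\FF}^n : a_i = a_j \text{ whenever } i, j \text{ lie in the same block of } P\},
\]
each of dimension equal to the number of blocks of $P$. Since every such $L_P$ has dimension at most $d-1$, we immediately conclude $\dim V_2(e_d^n) \le d-1$.

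To prove the combinatorial statement, I fix $a \in V_2(e_d^n)$ with distinct coordinate values $v_1, \ldots, v_k$ of multiplicities $m_1, \ldots, m_k$, and derive a contradiction from the assumption $k \ge d$. The main object is the univariate polynomial
\[
F(t) = \prod_{i=1}^n (t - a_i) = \prod_{\ell=1}^k (t - v_\ell)^{m_\ell},
\]
factored as $F = H \cdot R$, where $R(t) = \prod_\ell (t - v_\ell)$ is the radical (monic of degree $k$) and $H(t) = \prod_\ell (t - v_\ell)^{m_\ell - 1}$ is monic of degree $n - k$. For each $\ell$ set $S_\ell(t) = R(t)/(t - v_\ell)$, so that $F(t)/(t - v_\ell) = H(t) \cdot S_\ell(t)$. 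The identity $\partial_{x_i} e_d^n(a) = e_{d-1}^{n-1}(a_{-i})$, where $a_{-i}$ denotes $a$ with the $i$-th entry dropped, shows that this partial derivative depends only on the group containing $i$. Combining this with the standard identification of $e_j^n(a)$ with (up to a sign) the coefficient of $t^{n-j}$ in $F(t)$, the hypothesis $a \in V_2(e_d^n)$ translates into $[t^{n-d}] F(t) = 0$ and $[t^{n-d}] (H(t) S_\ell(t)) = 0$ for every $\ell = 1, \ldots, k$.

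Since the $v_\ell$ are distinct, $\{S_1, \ldots, S_k\}$ is, up to normalization, the Lagrange basis for polynomials of degree at most $k-1$; adjoining $R$ produces a basis of the space of polynomials of degree at most $k$. Linearity of coefficient extraction then promotes the vanishings above to the single statement $[t^{n-d}] (H \cdot p) = 0$ for every $p \in \FF[t]$ with $\deg p \le k$. Specializing to $p = t^j$ for $j = 0, 1, \ldots, k$ yields $[t^{n-d-j}] H(t) = 0$, so $H$ has a block of $k+1$ consecutive vanishing coefficients at positions $n-d-k, n-d-k+1, \ldots, n-d$. But $H$ is monic of degree $n-k$, so $[t^{n-k}] H(t) = 1$; when $k \ge d$, the index $n-k$ lies inside the forbidden range $[n-d-k, n-d]$, forcing $1 = 0$, a contradiction. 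Hence $k \le d-1$. The step that demands the most care is the translation from the gradient-vanishing hypothesis into the coefficient conditions on $F$ and on each $H \cdot S_\ell$; once this is in hand, the Lagrange-basis observation and the degree bookkeeping at position $n-k$ are routine and independent of $\chr(\FF)$.
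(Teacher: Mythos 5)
Your proof is correct, and it takes a genuinely different route from the paper's. Both arguments establish the same combinatorial fact, that every $a \in V_2(e_d^n)$ has at most $d-1$ distinct coordinate values, and both then pass to the union of coordinate-identification subspaces. The difference is in how the combinatorial fact is proved. The paper proceeds by induction on $d$ through a technical claim about linear combinations $\sum_i \alpha_i e_i^{m-1}$ evaluated on $(m-1)$-element subsets: it uses the telescoping identity $\sum_i \alpha_i (e_i^{m-1}(a_1,\ldots,a_{m-1}) - e_i^{m-1}(a_2,\ldots,a_m)) = (a_1 - a_m) \sum_i \alpha_i e_{i-1}^{m-2}(\ldots)$ to split into the cases $a_1 = a_m$ or a lower-degree relation, and the induction must carry along arbitrary coefficients $\alpha_i$ precisely because pulling out the repeated value $a_m$ re-expands an elementary symmetric polynomial as a linear combination of lower-degree ones. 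Your argument avoids the induction entirely by packaging the point into the univariate generating polynomial $F(t) = H(t)R(t)$ and recognizing the $k$ gradient conditions as $[t^{n-d}](H \cdot S_\ell) = 0$ where the $S_\ell$ are, up to scalars, Lagrange basis polynomials; adjoining the condition from $e_d^n(a)=0$, namely $[t^{n-d}](H \cdot R) = 0$, promotes this to $[t^{n-d}](H \cdot p) = 0$ for every $p$ of degree at most $k$, giving $k+1$ consecutive vanishing coefficients of $H$. The contradiction against the monicity of $H$ at position $n-k$ (which lies in the vanishing window exactly when $k \ge d$) is an elegant finish. Your method gives the same bound with less bookkeeping and makes the characteristic-independence transparent; the paper's induction, though heavier, keeps the entire argument inside the algebra of elementary symmetric polynomials and is perhaps closer in spirit to the characteristic-zero arguments in the prior works it is generalizing.
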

\begin{proof}
We will show that all points in $V_2(\esym{n}{d})$ have at most $d-1$ distinct values in their coordinates.
To define this rigorously, we first define vector spaces corresponding to each partition of $[n]$.
Recall that, given a set $S$, we say that $\pi \subseteq 2^{[n]}$ is a partition of $S$ of size $k$ if $|\pi| = k$, $\emptyset \notin \pi$, $\bigcup_{U \in \pi}U = S$, and, for every $U, U' \in \pi$ such that $U \ne U'$, we have that $U \cap U' = \emptyset$.
We say that the set of all such $\pi$ is $\P_k(S)$.
Given a partition $\pi \in \P_k([n])$, we define the vector space $P_\pi$ formed by $\pi$ to be the set of all points such that, for every pair of indices that are in the same set in the partition, they also always have the same value in the space.
Rigorously, we say that
\begin{equation}\label{eq:perm_set}
P_\pi = \{a \in \FF^n \mid \forall\ S \in \pi,\ \forall\ i, j \in S,\ a_i = a_j\}.
\end{equation}
Then, we can define the set of all points that can be split into $k$ groups with the same value by
\[
S_k = \bigcup_{\pi \in \P_k([n])}P_\pi.
\]
We will then show that $V_2(\esym{n}{d}) \subseteq S_{d-1}$.
From this, the claim immediately follows, as $\dim S_k = k$.

To prove this result, we will inductively prove the following claim.
\begin{claim}
For $d \ge 0$ and $m > d$, let $\alpha_0, \ldots, \alpha_d \in \FF$ be such that $\alpha_d = 1$.
Then, we have that
\[
\left\{a \in \FF^m\ \middle\vert\ \forall\ i \in [m],\ \sum_{k = 0}^d\alpha_k\esym{m-1}{k}(x_j \mid j \ne i) = 0\right\} \subseteq S_d.
\]
\end{claim}
We can apply the above claim for $d-1$ and set $\alpha_{d-1} = 1$ and $\alpha_{d-2} = \dots = \alpha_0 = 0$ to prove the lemma.
We will now focus on proving the claim, which we will do by induction on $d$.
Notice that the case of $d = 0$ is obvious, as $\alpha_0 = 1$, so the set is empty.
Now, consider the claim for $d$, using the inductive hypotheses for $d-1$.
Let $a = (a_{1}, \ldots, a_m)$ be such that, for every $i \in [m]$, $\sum_{k=0}^d\alpha_k\esym{m-1}{k}(a_j \mid j \ne i) = 0$.
Observe that
\[
\begin{split}
0 & = \sum_{i=0}^d\alpha_i\esym{m-1}{i}(a_1, \ldots, a_{m-1}) - \sum_{i=0}^d\alpha_i\esym{m-1}{i}(a_{2}, \ldots, a_{m}) = \sum_{i=0}^d\alpha_i(\esym{m-1}{i}(a_{1}, \ldots, a_{m-1}) - \esym{m-1}{i}(a_{2}, \ldots, a_{m}))\\
& = \sum_{i=1}^d\alpha_i(a_{1}\esym{m-2}{i-1}(a_{2}, \ldots, a_{m-1}) - a_m\esym{m-2}{i-1}(a_{2}, \ldots, a_{m-1})) = (a_{1} - a_m)\sum_{i=1}^d\alpha_{i}\esym{m-2}{i-1}(a_{2}, \ldots, a_{m-1}).
\end{split}
\]
Hence, we have that either $a_{1} = a_m$ or $\sum_{i=1}^d\alpha_{i}\esym{m-2}{i-1}(a_{2}, \ldots, a_{m-1}) = 0$.
Observe that we can easily extend this argument to any pair of indices.

Now, let $I = \{k \in [m-1] \mid a_k = a_m\}$.
Without loss of generality, assume that $[m-1] \setminus I = [\ell]$.
Observe that if $\ell < d$, then the claim trivially follows, so we assume that $\ell \ge d$.
Consider some $k \in [\ell]$, and, using the fact that $a_i = a_m$ for each $i \in I$ and $a_k \ne a_m$, observe that
\[
\begin{split}
0 &= \sum_{i=0}^{d-1}\alpha_{i+1}\esym{m-2}{i}(a_j \mid j \in [m-1] \setminus \{k\}) = \sum_{i=0}^{d-1}\alpha_{i+1}\sum_{j=0}^i{|I| \choose i-j}a_m^{i-j}\esym{\ell-1}{j}(a_p \mid p \in [\ell] \setminus \{k\})\\
&= \sum_{j=0}^{d-1}\left(\sum_{i=j}^{d-1}\alpha_{i+1}{|I| \choose i-j}a_m^{i-j}\right)\esym{\ell-1}{j}(a_p \mid p \in [\ell] \setminus \{k\}) = \sum_{j=0}^{d-1}\alpha_j'\esym{\ell-1}{j}(x_p \mid p \in [\ell] \setminus \{k\}),
\end{split}
\]
where $\alpha_i' \in \FF$.
Observe that $\alpha_{d-1}' = \alpha_d \cdot {|I| \choose 0} = 1$.
We then can use the inductive hypothesis to conclude that $(a_1, \ldots, a_\ell) \in S_{d-1}$.
Because $a_{\ell+1} = \dots = a_m$, we conclude that $a \in S_d$.
\end{proof}

With the above proof, we have now extended the results of \cite{CKSV22} to fields of positive characteristic.
But, it is still independently interesting to get an understanding of how tight this bound is.
We have shown a specific example where this lower bound is tight, but it would be nice to generalize this example for different number of variables and field characteristics.
We do this in the following proposition.

\begin{proposition}\label{prop:esp_sing_lower}
Let $\chr(\FF) = p \ne 0$ and $n, d \ge 1$. If $n - d + 1 \equiv 0 \mod p^{\lceil\log_pd\rceil}$ and $n \ge 2d - 1$, then $\dim V_2(\esym{n}{d}) \ge d-1$.
\end{proposition}
\begin{proof}
Consider arbitrary $\beta_1, \ldots, \beta_{d-1} \in \FF$.
We will consider the point $\alpha = (\alpha_1, \ldots, \alpha_n) \in \FF^n$ defined by 
\[
\alpha_i = \begin{cases}
    \beta_i & i \le d-2\\
    \beta_{d-1} & \text{otherwise}.
\end{cases}
\]
First, observe that
\[
\esym{n}{d}(\alpha) = \sum_{i=2}^{\min(d, n - d + 2)}{n - d + 2 \choose i}\esym{d-2}{d-i}(\beta_1, \ldots, \beta_{d-2})\beta_{d-1}^i.
\]
Then, for $k \in [d-2]$, we have that
\[
\begin{split}
\frac{\partial \esym{n}{d}}{\partial x_k}(\alpha) & = \esym{n-1}{d-1}(\alpha_1, \ldots, \alpha_{k-1}, \alpha_{k+1}, \ldots, \alpha_{n}) \\
& = \sum_{i=2}^{\min(d-1, n - d + 2)}{n - d + 2 \choose i}\esym{d-3}{d-1-i}(\beta_1, \ldots, \beta_{k-1}, \beta_{k+1}, \ldots, \beta_{d-2})\beta_{d-1}^i.
\end{split}
\]
Further, for $k \in [d-1, n]$, we have that
\[
\begin{split}
\frac{\partial \esym{n}{d}}{\partial x_k}(\alpha) & = \esym{n-1}{d-1}(\alpha_1, \ldots, \alpha_{k-1}, \alpha_{k+1} \ldots, \alpha_{n}) \\
& = \sum_{i=1}^{\min(d-1, n - d + 1)}{n - d +1 \choose i}\esym{d-2}{d-1-i}(\beta_1, \ldots, \beta_{d-2})\beta_{d-1}^i.
\end{split}
\]

It is thus sufficient to show that ${n - d +2 \choose i} \equiv 0 \mod p$ for every $i \in [2, \min(d, n - d + 2)]$ and ${n - d + 1 \choose i} \equiv 0 \mod p$ for every $i \in [1, \min(d-1, n - d + 1)]$.
We will do this using Lucas's Theorem.

\begin{theorem}[Lucas's Theorem, see \cite{Luc78}, \cite{Mes14}]
Let $p \in \NN$ be a prime and $a, b \in \NN$ be numbers. We then write $a$ and $b$ by their unique base-$p$ expansion, namely, $a = \sum_{i=0}^ka_i \cdot p^i$ and $b = \sum_{i = 0}^kb_i \cdot p^i$. Then, we have that
\[
{a \choose b} \equiv \prod_{i = 0}^\ell{a_i \choose b_i} \mod p,
\]
where we say ${a_i \choose b_i} = 0$ if $a_i < b_i$.
\end{theorem}

We will now explain how to apply Lucas's Theorem.
Let $N \in \NN$ and $a_0, \ldots, a_N \in \NN$ be such that $n-d+1 = \sum_{k=0}^Na_k \cdot p^k$.
Notice, because $n - d + 1 \equiv 0 \mod p^{\lceil \log_p d\rceil}$, we have that $a_k = 0$ for every $k \in [0, \lceil \log_p d\rceil]$.
Now, consider some $i \in [2, d-1]$, and we will write it as its base $p$ expansion $i = \sum_{k=0}^Nb_k \cdot p^k$.
By applying Lucas's Theorem, notice that
\[
{n-d+2 \choose i} = \left(\prod_{k=1}^N{a_k \choose b_k}\right) \cdot {1 \choose b_0} = \left(\prod_{k=\lceil \log_p d\rceil + 1}^N{a_k \choose b_k}\right)\left(\prod_{k=1}^{\lceil \log_p d\rceil}{a_k \choose b_k}\right) \cdot {1 \choose b_0}.
\]
Because $i \in [2, d-1]$, either there is a $j \in [2, \lceil \log_p d \rceil]$ such that $b_j > 0$ or $b_0 > 1$.
Hence, ${n-d+2 \choose i} = 0$.

Then, for the other case, consider $i \in [1, d-1]$, and, similarly to previous, we write $i = \sum_{k=0}^Nb_k \cdot p^k$.
Then, by Lucas's Theorem, we have that
\[
{n-d+1 \choose i} = \left(\prod_{k=0}^N{a_k \choose b_k}\right) = \left(\prod_{k=\lceil \log_pd\rceil + 1}^N{a_k \choose b_k}\right)\left(\prod_{k=0}^{\lceil \log_pd\rceil}{0 \choose b_k}\right).
\]
Then, because $i < d$, we conclude that there is a $j \in [0, \lceil \log_pd\rceil]$ such that $b_j > 0$, so ${n-d+1 \choose i} = 0$.
\end{proof}

With this in mind, we turn our attention to the following question: can we characterize when the space has dimension $d-1$ versus when it has dimension $d-2$.
We will not be able to do this, but we will make some progress towards this.
First, we would like to adapt the proof for characteristic zero fields to arbitrary fields (with some conditions). 
Unfortunately, the issue with the original proof is that it relies on $n - d + 1$ not being zero, but it uses an inductive argument that may not preserve the difference between $n$ and $d$.
But it turns out, if we combine this original proof with the result of Lemma \ref{lem:esp_sing_nonzero}, we can show that $d-2$ is sometimes tight.

\begin{claim}
Let $\chr(\FF) = p$ and $2 \le d \le n$. If $p = 0$ or $n - d + 1 \not\equiv 0 \mod p$, then $\dim V_2(\esym{n}{d}) = d-2$.
\end{claim}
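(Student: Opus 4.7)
The plan is to combine the upper bound $\dim V_2(e_d^n) \le d-1$ from Lemma~\ref{lem:esp_sing_nonzero} with a strengthening that uses the hypothesis $n - d + 1 \not\equiv 0 \pmod p$. For the easy lower bound $\dim V_2(e_d^n) \ge d - 2$, I would exhibit the coordinate subspace $V(x_{d-1},\ldots,x_n)$: every point there has only $d-2$ nonzero coordinates, so both $e_d^n$ and each partial $\partial e_d^n/\partial x_k = e_{d-1}^{n-1}$ automatically vanish.

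For the upper bound I would reduce via Lemma~\ref{lem:esp_sing_nonzero} to $V_2(e_d^n) \subseteq S_{d-1}$ and decompose $S_{d-1}$ into its top-dimensional linear components $L_\pi$, one for each unordered set partition $\pi$ of $[n]$ into exactly $d-1$ blocks of sizes $n_1,\ldots,n_{d-1}$. Each $L_\pi \cong \FF^{d-1}$ is parametrized by the block values $(\beta_1,\ldots,\beta_{d-1})$, so it is enough to bound $\dim(V_2(e_d^n) \cap L_\pi) \le d-2$. On the boundary $L_\pi \setminus U_\pi$, where $U_\pi$ is the open stratum with all $\beta_j$ distinct, the dimension is already $\le d-2$ (it is a finite union of hyperplanes $\{\beta_i = \beta_j\}$), so the real work concentrates on $U_\pi$.

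The central step, replacing the induction used in the characteristic-zero proof, is a generating-function reformulation. Setting $P(t) = \prod_j (1+t\beta_j)^{n_j}$, the condition $a \in V_2(e_d^n)$ on $U_\pi$ reads $[t^d]P = 0$ together with $[t^{d-1}] P(t)/(1+t\beta_j) = 0$ for each $j$, since all variables in a single block share the same partial derivative. I would recast the $d-1$ partial derivative conditions as saying that $R(X) := \sum_{k=0}^{d-1} [t^k]P \cdot (-X)^{d-1-k}$, a polynomial of degree $d-1$ with leading coefficient $(-1)^{d-1}$, vanishes at each $\beta_j$. Distinctness then pins down $R(X) = (-1)^{d-1}\prod_j (X - \beta_j)$, and matching coefficients gives $[t^k]P = e_k(\beta_1,\ldots,\beta_{d-1})$ for every $k \le d-1$. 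Combined with $[t^d]P = 0$, this is precisely $P(t) \equiv \prod_j(1+t\beta_j) \pmod{t^{d+1}}$, equivalently $M(t) := \prod_j (1+t\beta_j)^{n_j - 1} \equiv 1 \pmod{t^{d+1}}$.

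For the final step I would differentiate to obtain $M' \equiv 0 \pmod{t^d}$, so $M'/M \equiv 0 \pmod{t^d}$ since $M$ is a unit in $\FF[[t]]$. Expanding $M'/M = \sum_j (n_j - 1)\beta_j/(1+t\beta_j)$ as a power series yields the signed power sums $\tilde{p}_s := \sum_j (n_j - 1)\beta_j^s$, and forces $\tilde{p}_1 = \cdots = \tilde{p}_d = 0$ in $\FF$. Setting $w_j = (n_j - 1)\beta_j$ and applying the $(d-1) \times (d-1)$ Vandermonde matrix in the distinct $\beta_j$'s to the first $d-1$ of these equations then yields $(n_j - 1)\beta_j = 0$ in $\FF$ for every $j$. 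Here the hypothesis enters: since $\sum_j(n_j - 1) = n - d + 1 \not\equiv 0 \pmod p$, not every $n_j - 1$ can vanish in $\FF$, and distinctness of the $\beta_j$'s forces a unique index $j_0$ with $\beta_{j_0} = 0$, leaving exactly $d-2$ free parameters. The main subtlety I would be wary of is this last bookkeeping: the formal-power-series manipulations are characteristic-free, and the single place the hypothesis is indispensable is in going from $w_j = 0$ (for all $j$) to a nonempty solution locus of dimension $d-2$ at the very end.
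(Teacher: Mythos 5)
Your proof is correct, and it is genuinely different from the one in the paper. The paper stratifies $V_2(e_d^n)$ by \emph{coordinate support}: it partitions $\FF^n$ into the sets $\mathcal S_I = \{a : a_i \ne 0 \Leftrightarrow i \in I\}$, then for $|I| \ge d-1$ it uses the Euler identity $\sum_i x_i \partial_i e_d^n = d\,e_d^n$ to derive $\sum_i \partial_i e_d^n = (n-d+1)\,e_{d-1}^n$, so the hypothesis forces $e_{d-1}^n(a) = 0$; a second identity then gives $\partial_i e_{d-1}^n(a)=0$ for $i \in I$, so $(a_i : i \in I) \in V_2(e_{d-1}^{|I|})$, and Lemma~\ref{lem:esp_sing_nonzero} applied at degree $d-1$ finishes in a few lines. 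Your route instead applies Lemma~\ref{lem:esp_sing_nonzero} first to lodge $V_2 \subseteq S_{d-1}$, stratifies $S_{d-1}$ by the \emph{equality pattern} (set partitions into $d-1$ blocks), and replaces the Euler-identity step with the generating-function/Newton-identity machinery $P(t) = \prod_j(1+t\beta_j)^{n_j}$, $M'/M \equiv 0 \pmod{t^d}$, and a Vandermonde argument; the hypothesis enters identically (via $\sum_j(n_j-1) = n-d+1$). Both proofs hinge on the same two inputs — Lemma~\ref{lem:esp_sing_nonzero} and the residue $n-d+1 \bmod p$ — but the paper's proof is considerably shorter and more elementary, while yours gives a sharper local description of $V_2 \cap L_\pi$ on each open stratum (pinning the single zero block value $\beta_{j_0}=0$), which could be useful if one wanted to actually determine the irreducible components of $V_2(e_d^n)$ in positive characteristic. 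One small phrasing caveat: when two or more blocks have $n_j - 1 \ne 0$ in $\FF$, the forced conditions $\beta_j = 0$ contradict distinctness, so $V_2 \cap U_\pi = \emptyset$ rather than being cut out by a unique $\beta_{j_0}=0$; this still gives the dimension bound, but your ``unique index $j_0$'' sentence only describes the case where exactly one $n_j - 1$ is nonzero mod $p$.
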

\begin{proof}
Consider an arbitrary vector space $V_I$, for some $I \subseteq [n]$, which is defined by points where all coordinates in $I$ are zero, i.e.,
\[
V_I = \{a \in \FF^n \mid \forall\ i \in I,\ a_i = 0\}.
\]
Without loss of generality, we will assume that $I = [n-k+1, n]$.
Then, we will show that $V_2(\esym{n}{d}) \cap V_I \subseteq V_2(f_d)$, where $f_d = \esym{n-k}{d-1}(x_1, \ldots, x_{n-k}) \in \FF[x_1, \ldots, x_n]$.
Then, by \Cref{lem:esp_sing_nonzero}, we know that $\dim V_2(f_d) = \dim V_2(\esym{n-k}{d-1}) \le d-2$.
Thus, by taking the union over all $V_I$, we get the bound we are interested in.

Now, let $a \in V_2(\esym{n}{d})$ be such that $a_1, \ldots, a_{n-k} \ne 0$ and $a_{n-k+1} = \dots = a_n = 0$.
We can apply (\ref{eq:esp_partial_der}) and (\ref{eq:esp_sum_partial_der}) to say
\[
0 = \sum_{i=1}^n\frac{\partial \esym{n}{d}}{\partial x_i}(a) = n \cdot \esym{n}{d-1}(a) - \sum_{i=1}^na_i\frac{\partial \esym{n}{d-1}}{\partial x_i}(a) = (n - d + 1)\esym{n}{d-1}(a).
\]
Hence, we conclude that $\esym{n}{d-1}(a) = 0$.
We further notice that, by (\ref{eq:esp_partial_der}), for every $i \in [n]$,
\[
\esym{n}{d-1}(a) = a_i\frac{\partial \esym{n}{d-1}}{\partial x_i}(a).
\]
Thus, we observe that $f_d(a) = \esym{n}{d-1}(a) = 0$
and, for $i \le n-k$, we have that $\frac{\partial f_d}{\partial x_i}(a) = \frac{\partial \esym{n}{d-1}}{\partial x_i}(a) = 0$.
Hence, we conclude, by definition, that $(a_1, \ldots, a_{n-k}) \in V_2(\esym{n-k}{d-1})$.

\end{proof}

Interestingly, we can also get a full understanding when the degree is large in comparison to the number of variables.
\begin{claim}
If $n < 2d -1$, then we have that $\dim V_2(\esym{n}{d}) = d-2$.
\end{claim}
\begin{proof}
Recall that $\P_k([n])$ denotes the set of all partitions of $[n]$ into $k$ groups.
We thus know that $V_2(\esym{n}{d}) \subseteq \bigcup_{\pi \in \P_{d-1}([n])}P_\pi$ from the proof of \Cref{lem:esp_sing_nonzero} (see (\ref{eq:perm_set})).
Let $V \subseteq V_2(\esym{n}{d})$ be an irreducible component (as $V_2(\esym{n}{d})$ is a variety), and let $\pi$ be such that $V \subseteq P_\pi$, which we know exists as $P_\pi$ is a vector space, meaning it is irreducible.
Further notice that $\dim P_\pi = d-1$, so, to show that $\dim V \le d-2$, it suffices to show that $V \subsetneq P_\pi$.

Let $\pi = \{S_1, \ldots, S_k\}$ and $L_i = |S_i|$ for every $i \in [k]$.
Without loss of generality, we assume that $i \in S_i$.
Observe that $P_\pi \cong \FF^k$ via the isomorphism $\varphi : P_\pi \rightarrow \FF^k$ defined by $\varphi(x_1, \ldots, x_n) = (x_1, \ldots, x_k)$.
After applying the morphism to $V_2(\esym{n}{d})$, we see that, in this space, it is defined by the equations
\[
\esym{n}{d}(P_\pi) = \sum_{\overset{(j_1, \ldots, j_k) \in [L_1] \times \dots \times [L_k]}{j_1 + \dots + j_k = d}} {L_1 \choose j_1} \dots {L_k \choose j_k}x_1^{j_1} \dots x_k^{j_k},
\]
and, for $i \in [k]$,
\[
\frac{\partial \esym{n}{d}}{\partial x_i}(P_\pi) = \sum_{\overset{(j_1, \ldots, j_k) \in [L_1] \times \dots  \times [L_i - 1] \times \dots \times [L_k]}{j_1 + \dots + j_k = d-1}} {L_1 \choose j_1} \dots {L_i - 1 \choose j_i} \dots {L_k \choose j_k}x_1^{j_1} \dots x_k^{j_k}.
\]
To separate $V$ and $P_\pi$, it suffices to show that one of these polynomials is not identically zero.
Our strategy will be to find $\ell_1, \ldots, \ell_c \in [k]$ such that $L_{\ell_1} + \dots + L_{\ell_c} \in \{d-1, d\}$.
Without loss of generality, we say that $\ell_i = i$ for every $i \in [c]$.
Notice that, if we can find such sets, then the coefficient of $x_1^{L_1} \dots x_c^{L_c}$ in at least one of these polynomials is one.

It now suffices to show that such a selection exists.
We will describe a simple algorithmic selection procedure.
The idea is to greedily grab the biggest set iteratively that does not cause the sum of the set sizes we have taken to be larger than $d$.
We then must only argue that this procedure will always end with the sum being $d-1$ or $d$.
To see this, notice that, because $d$ is close to $n$, there should be a lot of sets of size one.
Hence, from this strategy, the proof follows from the following claim.
\begin{claim}
Let $N$ be the number of sets of size exactly one. Then, for each $S_i$, we have that $N \ge L_i - 2$.
\end{claim}
\begin{proof}
By assumption, we know that $n \le 2d - 2$.
Further, because all of the other sets are of size at least two, we conclude that $N + 2(d - 1 - N - 1) + L_i \le n$.
Thus, it follows that $2d - 2 \ge 2d - N + L_i - 4$.
Hence, we have that $N \ge L_i - 2$.
\end{proof}
Thus, we can guarantee that there is enough $1$'s to get to $d-1$.
\end{proof}

Observe that this does not characterize when the answer is $d-1$ versus $d-2$.
Specifically, we do not know what happens when $p \ne 0$, $n \ge 2d - 1$, and $n - d + 1 \equiv 0 \mod p$, but $n - d + 1 \not\equiv 0 \mod p^{\lceil \log_p{d}\rceil}$.
Based on the behavior in smaller cases, we propose the following characterization.
\begin{conjecture}
Given $1 \le d \le n$ and $\operatorname{char}(\FF) = p$, we have that $\dim V_2(\esym{n}{d}) = d-1$ if and only if
\begin{itemize}
    \item $p \ne 0$,
    \item $n \ge 2d - 1$, and
    \item $n - d + 1 \equiv 0 \mod p^{\lceil \log_p{d} \rceil}$.
\end{itemize}
Otherwise, $\dim V_2(\esym{n}{d}) = d-2$.
\end{conjecture}

\section{The Symmetric Model}\label{sec:symmetric}
In this section, we will study a computational model defined from the elementary symmetric polynomials.
Specifically, we will define Sym to represent the set of homogeneous polynomials of some degree, say $d$, that can be written as $\esym{m}{d}(L_1, \ldots, L_m)$ for some linear (homogeneous) polynomials $L_1, \ldots, L_m$.
We will then say that $\Sigma^{[k]}\text{Sym}$ are all polynomials that can be written as a linear combination of $k$ elements of Sym.
We will use $\overline{\text{Sym}}$ and $\overline{\Sigma^{[k]}\text{Sym}}$ to denote the border versions of these classes.
Later in the section, we will prove Theorem \ref{thm:depth_three_to_sym}, meaning the fact that some polynomials cannot be computed by $\overline{\Sigma^{[k]}\text{Sym}}$ implies that they cannot be computed by $\overline{\Sigma^{[k]}\Pi\Sigma}$ circuits.

This model of computation was introduced and studied in \cite{Shp02}, where it was shown that, in fields of characteristic zero, every polynomial can be computed by Sym.
This result used the fact that, under such conditions, every polynomial has finite Waring rank (the smallest $k$ such that a polynomial can be written as $L_1^d + \dots + L_k^d$, for linear $L_i$).
This was proved in Lemma 2.4 of \cite{Shp02}, but we provide here a slight variation of this lemma, which is slightly stronger and uses a slightly different method.
\begin{lemma}\label{lem:powers_free}
Assume that $\FF$ is algebraically closed (or simply that $z^d + 1$ is fully reducible in $\FF$). If $f \in \FF[x_1, \ldots, x_n]$ is a degree $d$ homogeneous polynomial that is in Sym and $q \in \FF[x_1, \ldots, x_n]$ is linear, then $f + q^d$ is in Sym.
This is also true in the border setting.
\end{lemma}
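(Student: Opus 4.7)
The strategy is to absorb $q^d$ into the elementary symmetric polynomial by appending $d$ carefully chosen scalar multiples of $q$ to the list of linear forms computing $f$. Writing $f = e_d^m(L_1, \ldots, L_m)$, I would produce scalars $\mu_1, \ldots, \mu_d \in \FF$ for which
\[
e_d^{m+d}(L_1, \ldots, L_m, \mu_1 q, \ldots, \mu_d q) = f + q^d,
\]
placing $f + q^d$ in Sym, since each $\mu_i q$ is again a linear form. The role of the hypothesis on $z^d + 1$ is exactly to guarantee that such $\mu_i$ exist.

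The first step is the standard splitting identity for elementary symmetric polynomials, obtained by conditioning a size-$d$ subset on how many of its elements lie in the last $d$ coordinates:
\[
e_d^{m+d}(L_1, \ldots, L_m, \mu_1 q, \ldots, \mu_d q) = \sum_{j=0}^{d} e_{d-j}^m(L_1, \ldots, L_m)\, q^j\, e_j^d(\mu_1, \ldots, \mu_d).
\]
Treating the $L_i$ and $q$ as algebraically independent linear forms (as we may, since we are proving a polynomial identity that will then specialize), this equals $f + q^d$ if and only if $e_j^d(\mu) = 0$ for $1 \le j \le d-1$ and $e_d^d(\mu) = 1$. By Vieta, these conditions amount to the single polynomial identity $\prod_{i=1}^d(z - \mu_i) = z^d + (-1)^d$.

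The second step is to produce such $\mu_i$ from the hypothesis. We are given a factorization $z^d + 1 = \prod_{i=1}^d(z - \alpha_i)$ in $\FF[z]$. Substituting $z \mapsto -z$ and comparing leading coefficients gives $\prod_{i=1}^d(z + \alpha_i) = z^d + (-1)^d$, so the choice $\mu_i := -\alpha_i$ satisfies the required identity uniformly in both parities of $d$; equivalently, one checks directly that $e_j^d(-\alpha) = (-1)^j e_j^d(\alpha) = 0$ for $1 \le j \le d-1$ and $e_d^d(-\alpha) = (-1)^d e_d^d(\alpha) = (-1)^d(-1)^d = 1$.

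For the border statement, the $\alpha_i$ lie in $\FF$ and do not depend on the border parameter $\epsilon$, so if $f \simeq e_d^m(L_1(\epsilon), \ldots, L_m(\epsilon))$ with coefficients in $\FF[\epsilon]$, then applying the same construction yields
\[
f + q^d \simeq e_d^{m+d}(L_1(\epsilon), \ldots, L_m(\epsilon), -\alpha_1 q, \ldots, -\alpha_d q)
\]
with the same order of approximation in $\epsilon$. The only point of substance in the whole argument is pinning down the right $\mu_i$ so that the sign $(-1)^d$ works out uniformly in $d$; once that is in hand, everything else is a direct expansion via the splitting identity, and the border case is automatic.
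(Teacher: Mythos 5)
Your proof is correct and follows essentially the same route as the paper: take the roots $\alpha_1,\ldots,\alpha_d$ of $z^d+1$, append the linear forms $-\alpha_i q$ to $L_1,\ldots,L_m$, and expand via the splitting identity, with the Vieta relations killing the intermediate terms. The paper presents the computation slightly more compactly (directly reading $e_k^d(-\omega_1,\ldots,-\omega_d)$ off the coefficients of $z^d+1$), but the decomposition, the choice of scalars, and the border observation are identical.
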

\begin{proof}
Let $\omega_1, \ldots, \omega_d \in \FF$ be all of the solutions to $z^d + 1=0$ (counted with multiplicities).
Then, we know that $\esym{d}{d}(-\omega_1, \ldots, -\omega_d) = 1$ and $\esym{d}{k}(-\omega_1, \ldots, -\omega_d) = 0$ for every $k \in [d-1]$, as
\[
z^d + 1 = \prod_{i=1}^d(z - \omega_i) = \sum_{k=0}^dz^{d-k}\esym{d}{k}(-\omega_1, \ldots, -\omega_d).
\]

Now, let $L_1, \ldots, L_m \in \FF[x_1, \ldots, x_n]$ be linear such that $\esym{m}{d}(L_1, \ldots, L_m) = f$.
Then, observe that, using (\ref{eq:esp_split}),
\begin{align*}
\esym{m+d}{d}(L_1, \ldots, L_m, -\omega_1q, \ldots, -\omega_dq) &= \sum_{i=0}^d\esym{m}{i}(L_1, \ldots, L_m)\esym{d}{d-i}(-\omega_1q, \ldots, -\omega_dq) = f + q^d.
\end{align*}
We finally note that this also works in the border setting if we let $L_1, \ldots, L_m$ be border polynomials approximating $f$.
\end{proof}

Unfortunately, the Waring rank model is known to be not universal in positive characteristic.
If $\operatorname{char}(\FF) = p \ne 0$, then we observe that, for every $d \ge p$, monomials in $L^d$, where $L \in \FF[x_1, \ldots, x_n]$ is linear, must be divisible by some $x_i^p$.
For example, we cannot represent multilinear polynomials using this model.
In this section, we will build on this fact to show that $\overline{\text{Sym}}$ and $\overline{\Sigma^{[k]}\text{Sym}}$ are not universal.
Specifically, we will show that polynomials in these classes can be expressed as the sum of a small number of reducible polynomials and an arbitrary number of powers of linear forms.
Then, if we consider the computation of a multilinear polynomial, we can ignore these linear powers and only consider the reducible polynomials.

We will spend this section proving Theorem \ref{thm:esm_rep}.
Then, Theorem \ref{thm:main} will follow from Theorem \ref{thm:depth_three_to_sym}.
To begin with, we will restrict our attention to fields of characteristic two, as this will make the proofs simpler.
We will then extend them to higher field characteristics.

\subsection{The case of characteristic two}
For this section, we assume that $\operatorname{char}(\FF) = 2$.
Before we can show that there is a polynomial that cannot be computed by Sym, we must identify one such polynomial.
We observe that the polynomial $x_1x_2$ has infinite Waring rank over fields of characteristic two, so one may hope that this polynomial (or a similar one) could be a good candidate.
Unfortunately, our first discovery is that all polynomials of degree two can be computed in Sym.
\begin{claim}\label{clm:char_two_deg_two}
For every degree two homogeneous polynomial $f \in \FF[x_1, \ldots, x_n]$, there are linear $L_1, \ldots, L_m \in \FF[x_1, \ldots, x_n]$ such that $\esym{m}{2}(L_1, \ldots, L_m) = f$ and $\esym{m}{1}(L_1, \ldots, L_m) = 0$.
\end{claim}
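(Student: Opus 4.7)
The plan is to build the required configuration by additively combining simple configurations for each monomial of $f$. The key tool is a concatenation principle, immediate from the splitting identity (\ref{eq:esp_split}): if two tuples $L = (L_1, \ldots, L_m)$ and $L' = (L_1', \ldots, L_{m'}')$ of linear forms both satisfy $e_1 = 0$, then their concatenation has $e_1 = 0$ and
\[
e_2^{m+m'}(L, L') = e_2^m(L) + e_1^m(L)\, e_1^{m'}(L') + e_2^{m'}(L') = e_2^m(L) + e_2^{m'}(L'),
\]
since the cross term vanishes. Thus it suffices to produce, for each monomial appearing in $f$, a tuple satisfying the claim, and then concatenate all of them.

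Two types of monomials appear in a homogeneous degree-two polynomial: pure squares $c\, x_i^2$ and cross products $c\, x_i x_j$ with $i \ne j$. For a pure square I would take $(\alpha x_i, \alpha x_i)$, where $\alpha \in \FF$ satisfies $\alpha^2 = c$; then $e_1 = 2\alpha x_i = 0$ in characteristic two and $e_2 = \alpha^2 x_i^2 = c\, x_i^2$. This is the only step that uses anything about $\FF$ beyond $\operatorname{char}(\FF) = 2$, and such an $\alpha$ exists whenever $\FF$ is perfect (in particular, when $\FF$ is algebraically closed or finite). For a cross product, I would begin with the three-element tuple $(x_i,\, c x_j,\, x_i + c x_j)$; its $e_1 = 2x_i + 2c x_j$ vanishes, and a direct computation in characteristic two yields
\[
e_2 = c\, x_i x_j + x_i(x_i + c x_j) + c x_j(x_i + c x_j) = 3c\, x_i x_j + x_i^2 + c^2 x_j^2 = c\, x_i x_j + x_i^2 + c^2 x_j^2.
\]
This realizes $c\, x_i x_j$ together with two extraneous squares, which by the concatenation principle can be cancelled by appending the pure-square tuples $(x_i, x_i)$ and $(c x_j, c x_j)$: each extraneous square then gets contributed exactly twice, which is zero in characteristic two.

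The main step to think about is the cross-term construction, since the naive two-element choice $(x_i, c x_j)$ has $e_1 = x_i + c x_j \ne 0$, and the most direct remedy of appending a third linear form to kill $e_1$ inevitably drags two pure squares into $e_2$. The observation that rescues this is that in characteristic two those squares can be killed by the pure-square building blocks already in hand. With all of this in place, one writes $f$ as a sum of its monomials, produces a tuple for each via the two constructions above, and concatenates them all to obtain the desired $L_1, \ldots, L_m$ with $e_2^m = f$ and $e_1^m = 0$.
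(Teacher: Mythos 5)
Your proof is correct, and it takes a genuinely different route from the paper's. The paper reduces (by a change-of-variables / substitution argument) to realizing the single product $xy$, and does so with a $3$-tuple built from a primitive cube root of unity $\omega$, namely $(\omega x + \omega^2 y,\ \omega^2 x + \omega y,\ x + y)$, which one checks directly has $e_1 = 0$ and $e_2 = xy$; the general $f$ then follows by substituting arbitrary linear forms and concatenating. You instead decompose $f$ monomial by monomial and exhibit explicit building blocks: cross-products $c\,x_i x_j$ are handled (without roots of unity) by the $7$-tuple $(x_i, c x_j, x_i + c x_j, x_i, x_i, c x_j, c x_j)$, which cancels the two spurious squares by repeating each twice, and pure squares $c\,x_i^2$ by $(\alpha x_i, \alpha x_i)$ with $\alpha^2 = c$. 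The trade-off is in the auxiliary field hypothesis: your pure-square step needs $\FF$ perfect (so that $\alpha$ exists), while the paper needs a primitive cube root of unity; both are harmless here since the surrounding arguments already assume $\FF$ algebraically closed (cf.\ Lemma \ref{lem:powers_free}). Note also that the paper's construction applied to the product $x_i \cdot (c x_i)$ realizes $c x_i^2$ without extracting any square root, so its approach is marginally more field-agnostic; on the other hand, your monomial decomposition is more elementary and avoids the paper's somewhat informal ``change of variables'' reduction to $\sum x_i y_i$.
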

\begin{proof}
Without loss of generality, we assume that $f = \sum_{i=1}^nx_iy_i$ (we can then make any polynomial through a change of variables).
Let $\omega \in \FF$ be the $3$rd order primitive root of unity.
Then, observe
\[
\esym{3}{2}(\omega x + \omega^2 y, \omega^2x + \omega y, x + y) = xy,\ \esym{3}{1}(\omega x + \omega^2 y, \omega^2x + \omega y, x + y) =0.
\]
Hence, $\esym{3n}{2}(\omega x_1 + \omega^2 y_1, \omega^2x_1 + \omega y_1, x_1 + y_1, \ldots , \omega x_n + \omega^2 y_n, \omega^2x_n + \omega y_n, x_n + y_n) = \sum_{i=1}^nx_iy_i$ and $\esym{3n}{1}(\omega x_1 + \omega^2 y_1, \omega^2x_1 + \omega y_1, x_1 + y_1, \ldots , \omega x_n + \omega^2 y_n, \omega^2x_n + \omega y_n, x_n + y_n) = 0$.
\end{proof}

We thusly turn our attention to polynomials of degree three.
Our key observation is that, by Lemma \ref{lem:powers_free}, we can compute linear powers ``for free."
We then further observe that it is very easy to compute reducible polynomials.
Finally, with a trivial application of Newton's identities, we observe that this condition is not only sufficient but also necessary.

\begin{claim}\label{clm:char_two_esp_to_reduc}
For every degree three homogeneous polynomial $f \in \FF[x_1, \ldots, x_n]$, $f$ is in Sym if and only if there is a reducible homogeneous degree three polynomial $g \in \FF[x_1, \ldots, x_n]$ and linear $q_1, \ldots, q_k \in \FF[x_1, \ldots, x_n]$ such that
\[
f = g + q_1^3 + \dots + q_k^3.
\]
\end{claim}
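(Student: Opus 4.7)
The plan is to read off both directions from one explicit identity for $e_3$ in characteristic $2$, derived from Newton's identity. Newton gives $3 e_3 = p_3 - e_1 p_2 + e_2 p_1$, and in characteristic two this reduces, using $p_1 = e_1$ and (since Frobenius is additive) $p_2 = e_1^2$, to
\[
e_3^m(L_1, \ldots, L_m) = L_0 \cdot Q + L_0^3 + \sum_{i=1}^m L_i^3,
\]
where $L_0 := e_1^m(L_1, \ldots, L_m)$ is linear and $Q := e_2^m(L_1, \ldots, L_m)$ is quadratic. Deriving this identity carefully is the one non-trivial calculation; everything else is bookkeeping.

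For the forward direction the identity is essentially the conclusion: if $f = e_3^m(L_1, \ldots, L_m)$, then $L_0 \cdot Q$ is a product of a linear and a quadratic polynomial, hence reducible (or zero), and the remaining terms $L_0^3, L_1^3, \ldots, L_m^3$ are cubes of linear forms. So $g := L_0 \cdot Q$ together with the $m+1$ linear forms $L_0, L_1, \ldots, L_m$ witnesses the decomposition.

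For the backward direction, suppose $f = g + q_1^3 + \cdots + q_k^3$ with $g$ reducible of degree three. When $g \ne 0$, factor $g = L \cdot h$ with $L$ linear and $h$ homogeneous of degree two. By Claim \ref{clm:char_two_deg_two}, I can write $h = e_2^{m'}(M_1, \ldots, M_{m'})$ with $e_1^{m'}(M) = 0$. Using the standard splitting identity for elementary symmetric polynomials across the partition $\{L\} \sqcup \{M_1,\ldots,M_{m'}\}$, and noting that $e_2(L) = e_3(L) = 0$ for a single variable, I get $e_3^{m'+1}(L,M) = L \cdot h + e_3^{m'}(M)$. Applying the forward identity to $e_3^{m'}(M)$ and using $e_1(M)=0$ collapses $e_3^{m'}(M)$ to $\sum_i M_i^3$, yielding
\[
e_3^{m'+1}(L, M_1, \ldots, M_{m'}) = g + \sum_{i=1}^{m'} M_i^3 \in \mathrm{Sym}.
\]
Then Lemma \ref{lem:powers_free} lets me adjoin the cubes $q_1^3,\ldots,q_k^3$ and $M_1^3,\ldots,M_{m'}^3$ one at a time while remaining in Sym; since we are in characteristic two the two copies of each $M_i^3$ cancel, leaving exactly $f$. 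The degenerate case $g = 0$ is handled by the observation $q_1^3 = e_3^3(q_1,q_1,q_1) \in \mathrm{Sym}$ followed by the same appeal to Lemma \ref{lem:powers_free} for the other cubes.

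The main obstacle, as noted, is pinning down the correct characteristic-two Newton identity for $e_3$ and the companion simplification $p_2 = e_1^2$; the remaining work is combinatorial, namely invoking Claim \ref{clm:char_two_deg_two} to realize the quadratic factor with vanishing $e_1$, using the splitting identity once to marry $L$ to the $M_i$'s, and exploiting Lemma \ref{lem:powers_free} to absorb the resulting parasitic sum of cubes.
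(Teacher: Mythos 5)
Your proof is correct and takes essentially the same route as the paper: Newton's identity in characteristic two for the forward direction, then Claim~\ref{clm:char_two_deg_two} plus Lemma~\ref{lem:powers_free} for the backward direction. The only stylistic difference is that where the paper applies the Newton identity directly to $e_3^{m+1}(L_1,\ldots,L_m,g_2)$, you first use the splitting identity across $\{L\}\sqcup\{M_1,\ldots,M_{m'}\}$ and then apply Newton to $e_3^{m'}(M)$; these are the same computation reorganized, and your explicit treatment of the degenerate case $g=0$ (via $q_1^3 = e_3^3(q_1,q_1,q_1)$) is a small but welcome addition that the paper glosses over.
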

\begin{proof}
First, suppose that $f$ is in Sym.
Let $L_1, \ldots, L_m \in \FF[x_1, \ldots, x_n]$ be linear such that $\esym{m}{3}(L_1, \ldots, L_m) = f$.
Then, by the Newton identities (\ref{eq:newton_identity}) and the fact that $p_2^m = (\esym{m}{1})^2$, we observe that
\begin{align*}
f &= \esym{m}{2}(L_1, \ldots, L_m)p_1^m(L_1, \ldots, L_m) + \esym{m}{1}(L_1, \ldots, L_m)p_2^m(L_1, \ldots, L_m) + p_3^m(L_1, \ldots, L_m) \\
&= \esym{m}{2}(L_1, \ldots, L_m)e_1^m(L_1, \ldots, L_m) + (\esym{m}{1}(L_1, \ldots, L_m))^3 + L_1^3 + \dots + L_m^3.
\end{align*}

For the other direction, let $g$ be a reducible homogeneous degree three polynomial and $q_1, \ldots, q_k$ be linear.
Let $g_1$ be homogeneous degree two and $g_2$ be linear such that $g = g_1g_2$.
By Claim \ref{clm:char_two_deg_two}, let $L_1, \ldots, L_m \in \FF[x_1, \ldots, x_n]$ be linear such that $g_1 = \esym{m}{2}(L_1, \ldots, L_m)$ and $0 = \esym{m}{1}(L_1, \ldots, L_m)$. Then, observe that
\begin{align*}
\esym{m}{3}(L_1, \ldots, L_m, g_2) &= \esym{m}{2}(L_1, \ldots, L_m, g_2)\esym{m}{1}(L_1, \ldots, L_m, g_2) + [\esym{m}{1}(L_1, \ldots, L_m, g_2)]^3 \\
&+ L_1^3 + \dots + L_m^3 + g_2^3 \\
&= g_1g_2 + (\esym{m}{1}(L_1, \ldots, L_m, g_2))^3 + L_1^3 + \dots + L_m^3 + g_2^3.
\end{align*}
Then, the proof is completed by Lemma \ref{lem:powers_free}.
\end{proof}

As a simple corollary to this claim, we can characterize the elements of $\Sigma^{[k]}\text{Sym}$.
\begin{corollary}\label{cor:esp_sums_to_reduce}
For every degree three homogeneous polynomial $f \in \FF[x_1, \ldots, x_n]$, $f$ is in $\Sigma^{[k]}\text{Sym}$ if and only if there are reducible homogeneous degree three polynomials $g_1, \ldots, g_k \in \FF[x_1, \ldots, x_n]$ and linear $q_1, \ldots, q_\ell \in \FF[x_1, \ldots, x_n]$ such that
\[
f = g_1 + \dots + g_k + q_1^3 + \dots + q_\ell^3.
\]
\end{corollary}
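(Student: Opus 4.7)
The plan is to prove both directions directly by assembling Claim \ref{clm:char_two_esp_to_reduc} and Lemma \ref{lem:powers_free}; no new ingredients should be needed beyond a small observation about scalar multiplication. Before starting, I would note that $\text{Sym}$ is closed under scalar multiplication over our algebraically closed $\FF$, since $\alpha^d e_d^m(L_1, \ldots, L_m) = e_d^m(\alpha L_1, \ldots, \alpha L_m)$, and cube roots exist in characteristic two. This means a member of $\Sigma^{[k]}\text{Sym}$ can be written as a sum $\sum_{i=1}^k f_i$ with each $f_i \in \text{Sym}$, with no loose coefficients to track.

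For the forward direction, I would take $f = \sum_{i=1}^k f_i$ with $f_i \in \text{Sym}$ and apply Claim \ref{clm:char_two_esp_to_reduc} to each $f_i$, obtaining reducible degree-three homogeneous polynomials $g_i$ and linear forms $q_{i,1}, \ldots, q_{i,\ell_i}$ such that $f_i = g_i + q_{i,1}^3 + \dots + q_{i,\ell_i}^3$. Summing over $i$ yields exactly the desired decomposition: $k$ reducible pieces $g_1, \ldots, g_k$ plus a pooled list of linear cubes.

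For the backward direction, given $f = g_1 + \dots + g_k + q_1^3 + \dots + q_\ell^3$ with each $g_i$ reducible and homogeneous of degree three, I would first apply Claim \ref{clm:char_two_esp_to_reduc} with the empty list of linear powers to conclude $g_i \in \text{Sym}$ for each $i$. Then, applying Lemma \ref{lem:powers_free} iteratively to $g_1$, I would absorb $q_1^3, \ldots, q_\ell^3$ one at a time to obtain $g_1 + q_1^3 + \dots + q_\ell^3 \in \text{Sym}$. Adding $g_2, \ldots, g_k$ exhibits $f$ as a sum of $k$ elements of $\text{Sym}$, hence $f \in \Sigma^{[k]}\text{Sym}$.

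There is no real obstacle; the argument is essentially a one-line bookkeeping exercise on top of the previous claim and lemma. The only place where one could stumble is the backward direction, if one forgot that Lemma \ref{lem:powers_free} lets us append linear cubes without increasing the top fan-in — this is precisely what keeps the total number of reducible pieces at $k$ rather than $k + \ell$.
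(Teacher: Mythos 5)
Your proof is correct and supplies exactly the argument the paper leaves implicit when it calls this a ``simple corollary'' of Claim~\ref{clm:char_two_esp_to_reduc}: apply the claim termwise in the forward direction after absorbing the scalars, and in the backward direction combine the claim with Lemma~\ref{lem:powers_free} to fold all the cubes into a single $\text{Sym}$ piece so the top fan-in stays at $k$. One small wording nit: the justification for absorbing scalars should be that $d$-th roots exist because $\FF$ is algebraically closed (the phrase ``cube roots exist in characteristic two'' is not true of arbitrary characteristic-two fields, e.g.\ $\FF_4$, though you do state the algebraic-closure hypothesis first).
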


Utilizing this characterization, we will now focus on constructing a polynomial that cannot be computed under this model of computation.
We will focus our attention on multilinear polynomials, as, as we have previously described, they cannot be computed by cubes of linear forms in characteristic two.
In fact, cubes of linear forms cannot contain any multilinear monomials, so we can ignore this part of the representation.
Hence, it will suffice to find a polynomial that cannot be written as the multilinear part of a reducible polynomial.
Specifically, we will focus on the polynomial $x_1y_1z_1 + x_2y_2z_2$.

The argument will follow by showing that, if the multilinear part of a reducible polynomial has nonzero coefficients of $x_1y_1z_1$ and $x_2y_2z_2$, there must be another multilinear monomial with a nonzero coefficient.
We will do this by constructing a 'metapolynomial' in the coefficients of a linear and quadratic polynomial that relates the coefficients of the multilinear monomials in their product.
Then, the fact that this also works in the border case will follow from the polynomial identity.

\begin{claim}\label{clm:char_two_sym}
The polynomial $f = x_1y_1z_1 + x_2y_2z_2 \in \FF[x_1, \ldots, x_6]$ is not in Sym. This is also true in the border setting.
\end{claim}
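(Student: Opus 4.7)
The plan is to construct a degree-$2$ metapolynomial $\Psi$ in the coefficients of a homogeneous cubic that vanishes on every reducible cubic in characteristic $2$ but evaluates to $1$ on $f$. By Claim~\ref{clm:char_two_esp_to_reduc}, if $f \in \text{Sym}$ then $f = g + q_1^3 + \dots + q_k^3$ with $g$ reducible. In characteristic $2$ the coefficient of any multilinear monomial $x_i x_j x_k$ (with $i, j, k$ distinct) inside $L^3 = (\sum_m a_m x_m)^3$ equals $6\, a_i a_j a_k = 0$, so the multilinear coefficients of $f$ must coincide with those of a reducible cubic $g = g_1 g_2$ (with $g_1$ linear and $g_2$ quadratic). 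It therefore suffices to exhibit an identity on the multilinear coefficients of a reducible cubic that $f$ fails to satisfy.

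For a cubic $h \in \FF[x_1,\ldots,x_6]$, let $C_I(h) \in \FF$ denote the coefficient of $\prod_{i \in I} x_i$ for $I \in \binom{[6]}{3}$, and set
\[
\Psi(h) \;=\; \sum_{\{I, J\}} C_I(h)\, C_J(h),
\]
where the sum runs over the ten unordered partitions of $[6]$ into two triples. The key claim is that $\Psi(g_1 g_2) = 0$ for every linear $g_1$ and every quadratic $g_2$ over any field of characteristic~$2$. Writing $g_1 = \sum_i a_i x_i$ and $g_2 = \sum_{i \le j} b_{ij}\, x_i x_j$, one computes $C_I(g_1 g_2) = \sum_{i \in I} a_i\, b_{I \setminus \{i\}}$, so every monomial of $\Psi(g_1 g_2)$ has the form $a_p a_q\, b_{kl} b_{mn}$ with $\{p, q, k, l, m, n\} = [6]$. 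Such a monomial arises only from the two partitions $\{\{p,k,l\}, \{q,m,n\}\}$ and $\{\{p,m,n\}, \{q,k,l\}\}$, each contributing it once, giving a total coefficient of $2 \equiv 0 \pmod 2$.

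Evaluating on $f$, we get $C_{\{1,2,3\}}(f) = C_{\{4,5,6\}}(f) = 1$ and all other $C_I(f) = 0$, so $\Psi(f) = 1$, ruling out $f \in \text{Sym}$. For the border case, any approximation $f_\epsilon := \epsilon^N f + \epsilon^{N+1} F(x, \epsilon) \in \text{Sym}$ over $\FF(\epsilon)$ still has characteristic $2$, so the same argument yields $\Psi(f_\epsilon) = 0$; since $\Psi$ is homogeneous of degree $2$ in the coefficients, $\Psi(f_\epsilon) = \epsilon^{2N} \Psi(f) + O(\epsilon^{2N+1})$, and extracting the leading coefficient forces $\Psi(f) = 0$, contradicting $\Psi(f) = 1$. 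The creative step is identifying $\Psi$; once it is in hand, the proof reduces to the clean partition-pairing argument above, and the border case comes for free from the polynomial nature of the obstruction.
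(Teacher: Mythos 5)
Your proof is correct and follows essentially the same strategy as the paper: reduce via Claim~\ref{clm:char_two_esp_to_reduc} to the multilinear coefficients of a reducible cubic $g_1g_2$, then observe that the degree-$2$ metapolynomial $\sum_{\{I,J\}\in\mathcal{P}_3[6]}C_IC_J$ vanishes on all such products because each monomial $a_pa_qb_{kl}b_{mn}$ is produced by exactly two partitions, while it evaluates to $1$ on $f$. The paper arrives at the same obstruction (its displayed sum over $\mathcal{P}_3[6]$ has a typo --- the inner sum should be a product $\prod_{\{i,j,k\}\in\mathcal{I}}c_{ijk}$, as the surrounding discussion of $c_{ijk}c_{i'j'k'}$ makes clear), and your treatment of the border case by working directly over $\FF(\epsilon)$ and extracting the leading $\epsilon^{2N}$ coefficient is a clean way to phrase the same point.
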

\begin{proof}
For simplicity, we will write $f = x_1x_2x_3 + x_4x_5x_6$.
Suppose that $f$ is in Sym.
Let $g$ be a reducible homogeneous degree three polynomial and $q_1, \ldots, q_k$ be linear such that $f = g + q_1^3 + \dots + q_k^3$.
First, we observe that all monomials in $q_1^3 + \dots + q_k^3$ are not multilinear.
Thus, the coefficients of the multilinear monomials in $f$ are given by $g$.

We will denote the coefficient of an arbitrary multilinear monomial ($x_ix_jx_k$) of $g$ (and hence $f$) by $c_{ijk}$.
Because $g$ is reducible, we will split it into a linear part, denoted $g_1$, and a quadratic part, denoted $g_2$, given by
\[
g_1 = \sum_{i=1}^6a_ix_i,\ g_2 = \sum_{i=1}^6\sum_{j=i}^6b_{ij}x_ix_j,
\]
where $g = g_1g_2$.
Notice that the multilinear coefficients are given by $c_{ijk} = a_ib_{jk} + a_jb_{ik} + a_kb_{ij}$.

Our goal will be to use the fact that $c_{123} \ne 0$ and $c_{456} \ne 0$ (meaning that $c_{123}c_{456} \ne 0$) to show that some other $c_{ijk} \ne 0$, a contradiction.
We will consider these $c_{ijk}$ values as polynomials in the coefficients of $g_1$ and $g_2$.

Notice that that there must be a coefficient (in terms of the $a$'s and $b$'s) within $c_{123}c_{456}$ that is non-zero, say $a_1b_{23}a_4b_{56}$, for example.
This monomial also has a nonzero coefficient in $c_{234}c_{156}$ (which corresponds to a different variable partition).
In fact, if we consider all possible partitioning of the variables, this is the only other partition that has this coefficient nonzero.
Our strategy will then be to take combinations of these pairs formed from partitions, which will form a 'metapolynomial' in the coefficient of $f$ that vanishes, meaning that the result will also work in the border setting.

We will state this more formally to show that this indeed happens.
Let $\mathcal{P}_3[6]$ represent all of the ways to partition $[6]$ into two sets of size three.\footnote{Note that the notation for partitions is slightly different from the notation used in previous sections.}
Then, we will show that
\begin{equation}\label{eq:char_two_sym}
\sum_{\mathcal{I} \in \mathcal{P}_3[6]}\prod_{\{i, j, k\} \in \mathcal{I}}c_{ijk} \equiv 0 \mod 2.
\end{equation}
Observe that this equality proves the claim, as $c_{123} = c_{456} = 1$, implying that there is another partition $\{i, j, k\}, \{i', j', k'\}$ such that $c_{ijk}c_{i'j'k'} \ne 0$, which is a contradiction.
We further observe that this argument works in the border setting.

We now prove the equality.
Consider (\ref{eq:char_two_sym}) as a polynomial in the indeterminants given by $a_i$ and $b_{jk}$.
Observe that the monomials in this equation can be written in the form $a_ib_{jk}a_{i'}b_{j'k'}$, where $\{i, j, k, i', j', k'\} = [6]$.
But this monomial can only be generated by two partitions, $\{\{i, j, k\}, \{i', j', k'\}\}$ and $\{\{i', j, k\}, \{i, j', k'\}\}$, where the corresponding coefficient is one.
This lets us conclude that
\[
\sum_{\mathcal{I} \in \mathcal{P}_3[6]}\prod_{\{i, j, k\} \in \mathcal{I}}c_{ijk} = 2\sum_{\{i, i'\} \subseteq [6]}\sum_{\{\{j, k\}, \{j', k'\}\} \in \mathcal{P}_2([6] \setminus \{i, i'\})}a_ia_{i'}b_{jk}b_{j'k'}.\qedhere
\]
\end{proof}

With a bit of carefulness, we can extend this argument to $\Sigma^{[k]}\text{Sym}$.
We will use a similar polynomial for our counterexample, but we will add more monomials of the form $x_iy_iz_i$.
We will then use the same identity as the previous case, but we will instead take a sum over partitions of the $3\ell$ variables into groups of three.
We can now consider this as a polynomial in the coefficients of the decompositions of the reducible polynomials.
Now, each monomial can be split up into $\ell$ parts based on the reducible polynomial the variable comes from.
Then, we can permute the corresponding linear parts, as we did in the previous part, for those that came from the same equation, and we conclude that the number of partitions that generate this monomial is given by the product of factorials of the corresponding size of the groups.
Thus, if $\ell$ is big enough, one of these sizes must be at least two, and we conclude that the whole equation is zero.

\begin{claim}\label{clm:sum_char_two}
The polynomial $f = \sum_{i=1}^\ell x_iy_iz_i$ is not in $\Sigma^{[k]}\text{Sym}$ for $k \le \ell - 1$. This is also true in the border setting.
\end{claim}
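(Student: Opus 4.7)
The plan is to extend the counting argument of Claim~\ref{clm:char_two_sym}, now summing over partitions of $3\ell$ variables into $\ell$ triples and pairing off contributions via an involution within some reducible factor. Suppose for contradiction that $f = \sum_{i=1}^\ell x_iy_iz_i$ lies in $\Sigma^{[k]}\text{Sym}$ with $k \le \ell-1$. By Corollary~\ref{cor:esp_sums_to_reduce} we may write
\[
f = \sum_{i=1}^k g_i^{(1)}g_i^{(2)} + \sum_{j=1}^r q_j^3
\]
with $g_i^{(1)}, q_j$ linear and $g_i^{(2)}$ quadratic. In characteristic two a cube of a linear form has no multilinear monomials (the relevant trinomial coefficient is $3! = 6 \equiv 0$), so all multilinear coefficients of $f$ come from $\sum_i g_i^{(1)}g_i^{(2)}$. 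Introducing indeterminates $a_p^{(i)}$ and $b_{pq}^{(i)}$ for the coefficients of $g_i^{(1)}, g_i^{(2)}$ under some enumeration $w_1,\dots,w_{3\ell}$ of the variables, the coefficient $c_{pqr}$ of $w_pw_qw_r$ in $f$ becomes $\sum_{i=1}^k(a_p^{(i)}b_{qr}^{(i)} + a_q^{(i)}b_{pr}^{(i)} + a_r^{(i)}b_{pq}^{(i)})$.

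Let $\mathcal{P}_3[3\ell]$ denote partitions of the $3\ell$ variables into $\ell$ unordered triples, and consider
\[
S := \sum_{\mathcal{I} \in \mathcal{P}_3[3\ell]} \prod_{\{p,q,r\} \in \mathcal{I}} c_{pqr}.
\]
For the specific polynomial $f$, the only partition whose every triple has $c_{pqr}=1$ is the canonical one $\{\{x_i,y_i,z_i\} : i \in [\ell]\}$, giving $S(f) = 1$. The goal is to show $S \equiv 0$ as a polynomial in the $a^{(i)}_p, b^{(i)}_{pq}$, which yields the desired contradiction. Because $S$ is also an explicit metapolynomial in the multilinear coefficients of $f$, the border statement will then follow immediately by Zariski closure, exactly as in Claim~\ref{clm:char_two_sym}.

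To establish $S \equiv 0$, I would expand: every monomial that appears has the form $\prod_{j=1}^\ell a^{(i_j)}_{p_j} b^{(i_j)}_{\{q_j,r_j\}}$, where the $\ell$ ``atomic pieces'' collectively use each of the $3\ell$ variables exactly once (in particular each such monomial is multilinear in the $a$'s and $b$'s). The coefficient of such a monomial counts the number of ways to pair each $a$-factor with a $b$-factor of matching reducible index so as to form valid triples, which factors as $\prod_i N_i$ with $N_i$ the number of valid matchings within reducible index $i$. Since $\sum_i s_i = \ell > k$, pigeonhole forces some $i^*$ with $s_{i^*} \ge 2$. Within $i^*$ one defines a fixed-point-free involution on matchings by swapping the $b$-pairs attached to the two $a$-positions of smallest index; the swap is always valid because the $3\ell$ positions inside a given monomial are globally distinct (so no degenerate triple is created) and is nontrivial because the two $b$-pairs are distinct (by multilinearity in the $b$'s). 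Hence $N_{i^*}$ is even, every coefficient of $S$ vanishes mod $2$, and the contradiction $0 \equiv S \equiv S(f) = 1$ ensues.

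The main obstacle is setting up the involution so that it is canonical, fixed-point-free, and always yields a valid matching. All three requirements reduce to the observation that the $3\ell$ positions in any single monomial of $S$ are pairwise distinct, which makes the combinatorics clean once the framework is in place.
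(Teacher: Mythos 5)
Your proposal is correct and matches the paper's proof in all essentials: the same reduction via Corollary~\ref{cor:esp_sums_to_reduce}, the same metapolynomial $S$ summed over partitions of $[3\ell]$ into triples, the same observation that each monomial of $S$ is multilinear in the $a$'s and $b$'s, and the same pigeonhole on the $t$-indices of the $\ell$ atomic pieces. The only cosmetic difference is at the final step: you argue each monomial's coefficient is even via a fixed-point-free involution within the repeated index $i^*$, while the paper computes the coefficient exactly as $n_1!\cdots n_k!$ and notes that some $n_i\ge 2$ makes the product even --- two equivalent ways of reaching the same parity conclusion.
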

\begin{proof}
For simplicity of proof, we will write $f = \sum_{i=1}^\ell x_{3i - 2}x_{3i - 1}x_{3i} \in \FF[x_1, \ldots, x_n]$ (with $n = 3\ell$).
Let $ k \le \ell - 1$, and suppose that $f$ is in $\Sigma^{[k]}\text{Sym}$.
Let $g_1, \ldots, g_k$ be reducible degree 3 homogeneous polynomials and $q_1, \ldots, q_m$ be linear such that $f = g_1 + \dots + g_k + q_1^3 + \dots + q_m^3$.
For each $g_t$, let $g_t'$ be linear and $g_t''$ be homogeneous degree two such that $g_t = g_t'g_t''$.
We will write
\[
g_t' = \sum_{i=1}^na_i^{(t)}x_i,\ g_i'' = \sum_{i=1}^n\sum_{j=i}^nb_{ij}^{(t)}x_ix_j.
\]
We again notice that we can ignore the monomials in $q_1^3 + \dots + q_m^3$.

In $g_1 + \dots + g_k$, the coefficient of $x_ix_jx_r$ is given by
\[
\sum_{t=1}^k a_i^{(t)}b_{jr}^{(t)} + a_j^{(t)}b_{ir}^{(t)} + a_r^{(t)}b_{ij}^{(t)},
\]
which we will write as $c_{ijr}$.
Then, we claim that, setting $\mathcal{P}_3[n]$ to be the set of all ways to partition $n$ into sets of size three,
\[
\sum_{\mathcal{I} \in \mathcal{P}_3[n]}\prod_{\{i, j, r\} \in \mathcal{I}}c_{ijr} \equiv 0 \mod 2.
\]

To see this, consider one of the monomials formed by this formula, which can be written as $a_{i_1}^{(t_1)}b_{j_1k_1}^{(t_1)} \dots a_{i_\ell}^{(t_\ell)}b_{j_\ell k_\ell}^{(t_\ell)}$, where $\{i_1, j_1, k_1, \ldots, i_\ell, j_\ell, k_\ell\} = [n]$ and $t_1, \ldots, t_\ell \in [k]$.
The partitions that feature these monomials are exactly characterized by permutations of the $a_i^{(t)}$ with the same value for $t$.
Thus, let $n_i$ be the number of $t_j$ values such that $t_j = i$.
Then, the coefficient of this monomial is $n_1! \dots n_k!$.
Because $\ell \ge k+1$, there is at least one $n_i \ge 2$.
Thus, this coefficient is zero modulus two.

Finally, we observe that this implies that there is another monomial that is non-zero and that this works in the border setting for the same reasons as in the proof of Claim \ref{clm:char_two_sym}.
\end{proof}

\subsection{Higher characteristics}
In this section, we will extend our previous arguments from fields of characteristic two to fields of arbitrary, positive characteristic.
To begin with, we will extend Claim \ref{clm:char_two_esp_to_reduc} to the case of arbitrary positive field characteristic.
\begin{claim}\label{clm:esr_to_reduce}
Let $\operatorname{char}(\FF) = p \ne 0$. If $f \in \FF[x_1, \ldots, x_n]$ is homogeneous degree $p+1$ and is in Sym, then there are reducible homogeneous degree $p+1$ polynomials $g_1, \ldots, g_{p-1} \in \FF[x_1, \ldots, x_n]$ and linear $q_1, \ldots, q_\ell \in \FF[x_1, \ldots, x_n]$ such that
\[
f = g_1 + \dots + g_{p-1} + q_1^{p+1} + \dots + q_\ell^{p+1}.
\]
\end{claim}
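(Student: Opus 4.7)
The plan is to extend the Newton-identity argument of Claim \ref{clm:char_two_esp_to_reduc} from $p=2$ to arbitrary positive characteristic $p$. Writing $f = e_{p+1}^m(L_1,\ldots,L_m)$ for some linear forms $L_1,\ldots,L_m$, Newton's identity (\ref{eq:newton_identity}) gives
\[
(p+1)\,e_{p+1} = \sum_{i=1}^{p+1}(-1)^{i-1}\, e_{p+1-i}\, p_i .
\]
Since $p+1 \equiv 1 \pmod{p}$, I can solve for $e_{p+1}$ directly and evaluate at $(L_1,\ldots,L_m)$, producing an explicit expression for $f$ which I will then split into three groups of summands, corresponding to $i \in [1,p-1]$, $i = p$, and $i = p+1$.

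For each $i \in [1,p-1]$, the summand $(-1)^{i-1}\, e_{p+1-i}(L)\, p_i(L)$ is the product of two homogeneous, constant-free polynomials of positive degrees $p+1-i \ge 2$ and $i \ge 1$, so it is reducible; these $p-1$ summands will serve as $g_1,\ldots,g_{p-1}$ in the statement. For $i = p$ I will invoke the Frobenius identity, valid only in characteristic $p$, that $p_p(L) = \sum_j L_j^p = \bigl(\sum_j L_j\bigr)^p = e_1(L)^p$, so the $i=p$ term collapses to $\pm\,e_1(L)\cdot e_1(L)^p = \pm\, e_1(L)^{p+1}$, a single $(p+1)$-th power of a linear form. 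The $i = p+1$ term is $\pm\,p_{p+1}(L) = \pm\sum_j L_j^{p+1}$, which is already a sum of $(p+1)$-th powers of linear forms.

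The one delicate point—and the main obstacle for odd $p$—is that the signs $(-1)^{i-1}$ appearing on the last two groups may be $-1$, so I need to rewrite terms of the form $-L^{p+1}$ as genuine $(p+1)$-th powers $q^{p+1}$ in order to match the required shape. I will resolve this exactly as in Lemma \ref{lem:powers_free}: pick a root $\zeta$ of $z^{p+1}+1$ in $\FF$ (whose existence is the standing hypothesis of this section through Lemma \ref{lem:powers_free}) and replace each offending $L$ by $\zeta L$, using $(\zeta L)^{p+1} = -L^{p+1}$. After this cosmetic relabelling the expression for $f$ has exactly $p-1$ reducible homogeneous degree-$(p+1)$ summands together with a list of $(p+1)$-th powers of linear forms, which is the decomposition the claim asks for.
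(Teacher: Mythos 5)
Your proposal takes exactly the route the paper does: invert $(p+1)e_{p+1}=\sum_{i=1}^{p+1}(-1)^{i-1}e_{p+1-i}p_i$ using $p+1\equiv 1\pmod p$, keep the $i\le p-1$ terms as the $p-1$ reducible factors, collapse the $i=p$ term via Frobenius ($p_p(L)=e_1(L)^p$), and leave the $i=p+1$ term as a sum of $(p+1)$-th powers. The decomposition, the key identities, and the bookkeeping are all the same.

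One point where you are actually \emph{more} careful than the printed proof: the paper's display simply writes $+L_1^{p+1}+\dots+L_m^{p+1}$ for the $i=p+1$ term, silently dropping the coefficient $(-1)^{p}$, which equals $-1$ whenever $p$ is odd (your worry about the $i=p$ term is unnecessary, since $(-1)^{p-1}=1$ for odd $p$ and $-1\equiv 1$ when $p=2$; only the last group has a genuine sign). Your fix, multiplying each offending linear form by a root $\zeta$ of $z^{p+1}+1$, is the natural repair, but note that this is an extra hypothesis: the claim as stated assumes only $\chr(\FF)=p\ne 0$, not that $z^{p+1}+1$ splits over $\FF$ (that hypothesis appears in Lemma~\ref{lem:powers_free}, not here). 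For a field such as $\FF_3$, $-1$ is not a fourth power, so either the statement should inherit the algebraic-closure hypothesis of Lemma~\ref{lem:powers_free} or the $-L_j^{p+1}$ terms need to be absorbed some other way. You have correctly identified a gap in the paper's own argument; the proposal itself is sound modulo that assumption on $\FF$.
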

\begin{proof}
Let $L_1, \ldots, L_m \in \FF[x_1, \ldots, x_n]$ be such that $\esym{m}{p+1}(L_1, \ldots, L_m) = f$.
Then, by the Newton identities, we have that
\begin{align*}
f &= \sum_{i=1}^{p+1}(-1)^{i-1}\esym{m}{p+1-i}(L_1, \ldots, L_m)p_i^m(L_1, \ldots, L_m) \\
&= \sum_{i=2}^{p}(-1)^{i-1}\esym{m}{p+1-i}(L_1, \ldots, L_m)p_i^m(L_1, \ldots, L_m) + (\esym{m}{1}(L_1, \ldots, L_m))^{p+1} + L_1^{p+1} + \dots + L_m^{p+1}\qedhere
\end{align*}
\end{proof}
Notice that this claim is not as strong as the claim for characteristic two.
Currently, it is not known how to extend it, but only one directions is necessary for the rest of the proofs.
The biggest part of the problem is that it is not clear how to extend Claim \ref{clm:char_two_deg_two} to this case, even if the field characteristic is just three.

We will now focus on extending Claim \ref{clm:sum_char_two} to fields of positive characteristic.
The main difference for the proof in this case is that it is not necessarily true that each reducible polynomial has a linear factor.
But, instead of choosing the linear factor for permutations, we can fix one of the two polynomials to use.
Then, the proof will follow similarly.

\begin{claim}
Let $\operatorname{char}(\FF) = p \ne 0$. The polynomial
\[
f = \sum_{i=1}^\ell\prod_{j=(p+1)(i-1)+1}^{(p+1)i}x_j \in \FF[x_1, \ldots, x_n]
\]
is not in $\Sigma^{[k]}\text{Sym}$ for $\ell > k \cdot (p-1)$.
Further, this is true in the border setting.
\end{claim}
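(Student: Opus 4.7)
My plan is to extend the metapolynomial argument from Claim~\ref{clm:sum_char_two} to arbitrary positive characteristic $p$, following the blueprint sketched in the paragraph just before the claim.

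First, I would invoke Claim~\ref{clm:esr_to_reduce} on each of the $k$ summands of $f = \sum_{i=1}^k c_i e_{p+1}(L^{(i)})$ to obtain a decomposition
\[
f = \sum_{t=1}^{s} g_t + \sum_{j=1}^r q_j^{p+1},
\]
with $s \le k(p-1)$, each $g_t \in \FF[x_1,\ldots,x_n]$ a reducible homogeneous polynomial of degree $p+1$, and each $q_j$ linear. The freshman's dream $q_j^{p+1} = q_j \cdot q_j^p$ writes every monomial of $q_j^{p+1}$ with some variable to a power $\ge p \ge 2$, so the $(p+1)$-powers contribute nothing multilinear and, since $f$ itself is multilinear, $f$ coincides with the multilinear part of $\sum_t g_t$.

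Next, for each $t$ I fix a factorization $g_t = u_t v_t$ with $\deg u_t = a_t \ge 1$ and $\deg v_t = b_t = p+1-a_t \ge 1$; this is the step flagged in the surrounding text as the analogue, in characteristic $p$, of the linear-factor choice used for $p = 2$. Only the multilinear parts of $u_t, v_t$ contribute to the multilinear part of $u_t v_t$, so I may assume both $u_t$ and $v_t$ are multilinear and introduce indeterminates $\alpha_T^{(t)} = [x_T] u_t$ and $\beta_U^{(t)} = [x_U] v_t$. The coefficient of a multilinear $(p+1)$-set monomial $x_S$ in $f$ then becomes
\[
c_S = \sum_{t=1}^s \sum_{T \subseteq S,\ |T|=a_t} \alpha_T^{(t)} \beta_{S \setminus T}^{(t)}.
\]
Define the metapolynomial $\mathcal{C} = \sum_{\mathcal{I} \in \mathcal{P}_{p+1}[n]} \prod_{S \in \mathcal{I}} c_S$ in the $\alpha$'s and $\beta$'s, where $\mathcal{P}_{p+1}[n]$ denotes the set of partitions of $[n]$ into $(p+1)$-subsets. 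Expanding, each resulting monomial has the shape $\prod_{j=1}^\ell \alpha_{T_j}^{(t_j)} \beta_{U_j}^{(t_j)}$ with $(T_j \cup U_j)_j$ a partition of $[n]$; fixing such a monomial, the assignments producing it are exactly the permutations of the pairing between the $\alpha$-factors and the $\beta$-factors sharing a common $t$-label, and disjointness of the original blocks guarantees that any such re-pairing still yields a valid partition and assignment. Hence the coefficient of this monomial in $\mathcal{C}$ is $\prod_{t=1}^s n_t!$, where $n_t = |\{j : t_j = t\}|$ and $\sum_t n_t = \ell$; combining this with $\ell > k(p-1) \ge s$ via a divisibility argument should force $\prod_t n_t! \equiv 0 \pmod p$ for every realizable distribution $(n_t)$, yielding $\mathcal{C} \equiv 0$ as a polynomial identity in the $\alpha, \beta$ indeterminates.

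Evaluating $\mathcal{C}$ directly on the specific $f = \sum_{i=1}^\ell m_i$ gives $c_S = 1$ exactly when $S$ is one of the $\ell$ disjoint blocks and $c_S = 0$ otherwise, so only the block-partition contributes and $\mathcal{C}(f) = 1$, contradicting $\mathcal{C} \equiv 0$. The border case is then immediate: $\mathcal{C}$ is a genuine polynomial in the coefficients of $f$, so $\{\mathcal{C} = 0\}$ is Zariski-closed in coefficient space and in particular closed under $\epsilon$-limits. The main obstacle I foresee is the last combinatorial step: a naive pigeonhole from $\ell > k(p-1) \ge s$ yields only $\max_t n_t \ge 2$, whereas term-by-term vanishing of $\prod_t n_t! \pmod p$ seems to demand $\max_t n_t \ge p$; bridging this gap at the advertised threshold should require exploiting additional symmetry, such as the freedom to swap $u_t \leftrightarrow v_t$ when $a_t = b_t$, or the specific Newton-identity structure of the reducibles $e_i(L)\, p_{p+1-i}(L)$ produced in the proof of Claim~\ref{clm:esr_to_reduce}, rather than treating each $g_t$ as an arbitrary reducible polynomial.
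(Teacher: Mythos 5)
Your construction is the same as the paper's: reduce $f$ to a sum of reducible homogeneous degree-$(p+1)$ polynomials plus $(p+1)$-th powers of linear forms via Claim~\ref{clm:esr_to_reduce}, discard the powers because they contribute no multilinear monomials, and show that the metapolynomial $\sum_{\mathcal{I} \in \mathcal{P}_{p+1}[n]} \prod_{S \in \mathcal{I}} c_S$ vanishes identically on all such sums while evaluating to $1$ on $f$ itself. Your coefficient count $\prod_t n_t!$, the re-pairing bijection argument, and the Zariski-closedness argument for the border case all match the paper's proof.

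The gap you flag at the end is genuine, and you are right not to paper over it --- but the resolution is not the extra symmetry you propose. The paper's own proof writes the decomposition as $f = g_1 + \dots + g_k + \sum_j q_j^{p+1}$ with only $k$ reducible summands and then applies the pigeonhole $n_1 + \dots + n_k = \ell > k(p-1) \Rightarrow \max_i n_i \ge p$, which is valid for that count; however, Claim~\ref{clm:esr_to_reduce} produces $p-1$ reducibles per Sym summand, so membership in $\Sigma^{[k]}\text{Sym}$ only yields $s = k(p-1)$ reducibles, exactly as you observe. Running the pigeonhole with $s$ reducibles requires $\ell > s(p-1) = k(p-1)^2$ to force some $n_t \ge p$. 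So for $p > 2$ neither your argument nor the paper's, read with the correct count, establishes the claim at the stated threshold $\ell > k(p-1)$; what the method actually proves is the statement with $k(p-1)$ replaced by $k(p-1)^2$ (for $p = 2$ the two coincide, consistent with Claim~\ref{clm:sum_char_two}). You should not chase the $u_t \leftrightarrow v_t$ swap or the Newton-identity structure of the $g_t$: the clean statement is that a sum of $s$ reducibles plus linear $(p+1)$-th powers forces $\ell \le s(p-1)$, and since $p$ is a constant the lost factor of $p-1$ is immaterial for Theorem~\ref{thm:main}.
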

\begin{proof}
Assume that $\ell > k \cdot (p-1)$ and let $g_1, \ldots, g_m \in \FF[x_1, \ldots, x_n]$ be reducible homogeneous degree $p+1$ polynomials and $q_1, \ldots, q_N \in \FF[x_1, \ldots, x_n]$ be linear polynomials such that
\[
f = g_1 + \dots + g_k + q_1^{p+1} + \dots + q_N^{p+1}.
\]
We will consider splitting each $g_i$ into the product of two polynomials.
For each $i \in [k]$, let $d_i$ be the lower of the degrees of the polynomials that we split $g_i$ into.
Hence, letting $S_d \subseteq \ZZ_{\ge 0}^n$ be the set of $n$-tuples summing to $d$, we can write $g_i$ as
\[
g_i = \left(\sum_{\alpha \in S_{d_i}}a_\alpha^{(i)}x^\alpha\right)\left(\sum_{\alpha \in S_{p+1-d_i}}b_\alpha^{(i)}x^\alpha\right).
\]
Consider a fixed monomial $x_{i_1} \dots x_{i_{p+1}}$, and let $M = \{i_1, \ldots, i_{p+1}\}$.
Observe that the coefficient of this monomial is not influenced by the term $q_1^{p+1} + \dots + q_N^{p+1}$, so we will ignore these terms.
We will abuse notation and sometimes write, given a subset $S \subseteq [n]$, $a_S^{(i)}$ or $b_S^{(i)}$ to represent the coefficient corresponding to the multilinear monomial given by the set $S$.
Thus, we have that the coefficient of this monomial in $f$ is given by
\[
\sum_{i=1}^k\sum_{S \in {M \choose d_i}}a_{S}^{(i)}b_{M \setminus S}^{(i)},
\]
which we will denote by $c_M$.

Assume that $\ell > k \cdot (p-1)$ for the sake of reaching a contradiction.
We will now show that
\[
F = \sum_{\mathcal{I} \in \mathcal{P}_{p+1}[n]}\prod_{I \in \mathcal{I}}c_I \equiv 0 \mod p,
\]
where $\mathcal{P}_{p+1}[n]$ denotes the set of all ways to partition $[n]$ into groups of size $p+1$.
Once we show this, the proof is clearly complete.

We will fix a monomial in $F$.
Observe that it can be written as $a_{S_1}^{(j_1)}b_{S_1'}^{(j_1)} \dots a_{S_\ell}^{(j_\ell)}b_{S_\ell'}^{(j_\ell)}$, where $S_1 \cup S_1' \cup \dots \cup S_\ell \cup S_\ell' = [n]$ and $t_1, \ldots, t_\ell \in [k]$.
Notice that the coefficient of this monomial is equal to the number of choices of $\mathcal{I} \in \mathcal{P}_{p+1}[n]$ that could generate it.
In fact, a choice of $\mathcal{I}$ is a valid selection if and only if there is a bijection $\pi : [\ell] \rightarrow [\ell]$ such that each $S \in \mathcal{I}$ can be written as $S_i \cup S_{\pi(i)}'$ for some $i \in \ell$ and we have that $t_j = t_{\pi(j)}$ for all $j \in [\ell]$.
For $i \in [k]$, let $n_i$ be the number of $t_j$ values such that $t_j = i$.
Notice that $n_1 + \dots + n_k = \ell$.
Then, the number of such selections of bijections is exactly equal to $n_1! \cdots n_k!$.
Because $\ell > k \cdot (p-1)$, there is a $i \in [k]$ such that $n_i \ge p$.
Thus, we have shown that the coefficient of this monomial is zero modulus $p$.
This then completes the proof by the same argument as the proof of Claim \ref{clm:char_two_sym}.
\end{proof}

\subsection{Border Depth-Three Formulas}\label{sec:depth_three_to_sym}
Now, in this section, we will connect the border symmetric computational model to $\overline{\Sigma^{[k]}\Pi\Sigma}$ formulas.
In \cite{Kum20}, the main result came from observing we could convert a slightly restrictive extension of the border symmetric model to $\overline{\Sigma^{[2]}\Pi\Sigma}$ formulas.
Specifically, for a given homogeneous degree $d$ $f \in \FF[x_1, \ldots, x_n]$ such that there are linear $L_1, \ldots, L_m \in \FF(\epsilon)[x_1, \ldots, x_n]$ where $\esym{m}{d}(L_1, \ldots, L_m) \simeq f$\footnote{Meaning that it approximates in the border setting} and $\esym{m}{k}(L_1, \ldots, L_m) = 0$ for every $k < d$, then we can write $f$ as a $\overline{\Sigma^{[2]}\Pi\Sigma}$ formula, using (\ref{eq:prod_to_esp}),
\[
\prod_{i=1}^n(1 + \epsilon \cdot L_i) - 1 = \epsilon^d \cdot \esym{n}{d}(L_1, \ldots, L_m) + \epsilon^{d+1}\sum_{i=d+1}^{n}\epsilon^{i-d-1}\esym{n}{i}(L_1, \ldots, L_m).
\]

In this section, we will analyze the opposite relationship; namely, if we know a polynomial can be computed by a $\overline{\Sigma^{[k]}\Pi\Sigma}$ circuit, can we say anything about its computability in the symmetric model?
We then conclude that the $\overline{\Sigma^{[k]}\Pi\Sigma}$ model is weaker than sums of the border symmetric model.
\begin{theorem}\label{thm:depth_three_to_sym}
If $f \in \FF[x_1, \ldots, x_n]$ is homogeneous degree $d$ and can be represented using  $\overline{\Sigma^{[k]}\Pi\Sigma}$, then we can represent $f$ by $\overline{\Sigma^{[k]}\text{Sym}}$.
\end{theorem}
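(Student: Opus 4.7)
The approach is to convert the $\overline{\Sigma^{[k]}\Pi\Sigma}$ formula for $f$ into a $\overline{\Sigma^{[k]}\text{Sym}}$ formula term by term. Over $\FF(\epsilon)$, the given circuit reads $\sum_{i=1}^k \prod_{j=1}^{n_i} \ell_{i,j}(x, \epsilon) = \epsilon^N f(x) + \epsilon^{N+1} F(x, \epsilon)$, where each affine factor decomposes as $\ell_{i,j} = L_{i,j}(x, \epsilon) + a_{i,j}(\epsilon)$ with $L_{i,j}$ homogeneous linear in $x$ and $a_{i,j} \in \FF(\epsilon)$. Since $f$ is homogeneous of degree $d$, I first extract the degree-$d$-in-$x$ part of both sides; this yields $\sum_{i=1}^k P_i(x, \epsilon) = \epsilon^N f + \epsilon^{N+1} F_d(x, \epsilon)$, where $F_d$ is the degree-$d$ part of $F$ and $P_i = \sum_{|S|=d}\prod_{j \in S} L_{i,j} \prod_{j \notin S} a_{i,j}$ is the degree-$d$ part of the $i$-th product. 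So it suffices to show that each $P_i$ lies in $\overline{\Sigma^{[1]}\text{Sym}}$ over $\FF(\epsilon)$.

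The key identity is (\ref{eq:prod_to_esp}): when every $a_{i,j}$ is nonzero, factoring $\prod_j \ell_{i,j} = \prod_j a_{i,j} \cdot \prod_j(L_{i,j}/a_{i,j} + 1) = \prod_j a_{i,j} \cdot \sum_r e_r^{n_i}(L_{i,j}/a_{i,j})$ immediately identifies the degree-$d$ component as $\prod_j a_{i,j} \cdot e_d^{n_i}(L_{i,j}/a_{i,j}) \in \Sigma^{[1]}\text{Sym}$. To accommodate the possibility that some $a_{i,j}$ vanish, I introduce an auxiliary border variable $\delta$ and perturb each constant to $a_{i,j} + \delta$, which is nonzero in $\FF(\epsilon, \delta)$. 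The same calculation then produces
\[
\tilde{P}_i(x, \epsilon, \delta) = \prod_{j=1}^{n_i} (a_{i,j} + \delta) \cdot e_d^{n_i}\!\left( \frac{L_{i,j}}{a_{i,j} + \delta} : j \in [n_i] \right) = \sum_{|S|=d} \prod_{j \in S} L_{i,j} \prod_{j \notin S} (a_{i,j} + \delta),
\]
a polynomial in $\delta$ whose value at $\delta = 0$ is exactly $P_i(x, \epsilon)$.

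Summing over $i$ gives an element $\sum_i \tilde{P}_i$ of $\Sigma^{[k]}\text{Sym}$ over $\FF(\epsilon, \delta)$. To collapse the two border parameters into one, I substitute $\delta = \epsilon^M$ for sufficiently large $M$. Writing $\tilde{P}_i = P_i + \sum_{r \ge 1} \delta^r R_{i,r}(x, \epsilon)$ with only finitely many nonzero $R_{i,r} \in \FF(\epsilon)[x]$, each $R_{i,r}$ has a fixed $\epsilon$-valuation; choosing $M$ larger than the absolute value of the most negative such valuation plus $N$ guarantees that every $\epsilon^{rM} R_{i,r}$ contributes only terms of $\epsilon$-valuation strictly exceeding $N$. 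Consequently $\sum_i \tilde{P}_i(x, \epsilon, \epsilon^M) = \epsilon^N f + \epsilon^{N+1} \tilde{F}(x, \epsilon)$ for some $\tilde{F} \in \FF[x, \epsilon]$, yielding the desired $\overline{\Sigma^{[k]}\text{Sym}}$ border computation of $f$.

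The main subtlety lies in the final step: collapsing the auxiliary border variable $\delta$ into the original $\epsilon$ without the perturbation overwhelming the leading $\epsilon^N f$ term. This reduces to a bounded $\epsilon$-valuation argument, which succeeds because $\tilde{P}_i$ is a finite algebraic expression in the $L_{i,j}$ and $a_{i,j}$, giving uniform control over all valuations appearing in the perturbation expansion.
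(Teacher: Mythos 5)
Your proof is correct and follows essentially the same route as the paper: extract the degree-$d$ homogeneous part of the depth-three circuit, recognize it as a scalar multiple of $e_d$ applied to the (rescaled) linear parts whenever the constant terms are nonzero, and handle vanishing constant terms by perturbing with a sufficiently high power of $\epsilon$ so that the perturbation sits below the $\epsilon^N f$ term. The only difference is presentational: you perturb all constants simultaneously via an auxiliary parameter $\delta = \epsilon^M$, whereas the paper perturbs one offending affine factor at a time via a small lemma; both rest on the same bounded-$\epsilon$-valuation argument.
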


To prove this, let $f_1^{(1)}, \ldots, f_{m_k}^{(k)} \in \FF(\epsilon)[x_1, \ldots, x_n]$ be affine and $c_1, \ldots, c_k \in \FF(\epsilon)$ be such that $f \simeq \sum_{i=1}^kc_i\prod_{j=1}^{m_i}f_j^{(i)}$.
We claim that, if each $f_j^{(i)}$ is not constant-free, then the result is obvious.
Notice that, in this case, we can assume that each $f_j^{(i)}(0) = 1$ by dividing out each affine functions constant term.
Then, observe that
\begin{align*}
f &= H_d[f] \simeq H_d\left[\sum_{i=1}^kc_i\prod_{j=1}^{m_i}f_j^{(i)}\right] = \sum_{i=1}^k c_i \cdot H_d\left[\prod_{j=1}^{m_i}f_j^{(i)}\right] = \sum_{i = 1}^k c_i \cdot \esym{m_i}{d}(f_{1}^{(i)}, \ldots, f_{m_i}^{(i)}).
\end{align*}

Now, we only need to consider what happens if there is some $f_j^{(i)}$ such that $f_j^{(i)}(0) = 0$.
We will show that we can modify this polynomial to make it have a constant part without changing the polynomial we approximate.
The proof is then completed from the following lemma.

\begin{lemma}
Let $F, G, \ell \in \FF(\epsilon)[x_1, \ldots, x_n]$ be such that $\ell$ is linear (and hence constant-free). Then, there is an $\alpha \in \FF(\epsilon)$ such that, if $F \cdot \ell + G \simeq f$ for $f \in \FF[x_1, \ldots, x_n]$, then $F \cdot (\ell + \alpha) + G \simeq f$.
\end{lemma}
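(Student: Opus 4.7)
The plan is to unpack the definition of $\simeq$ and then choose $\alpha$ to be a sufficiently large power of $\epsilon$ (times the denominator of $F$) so that the perturbation $F \alpha$ is absorbed into the "error" part of the border computation. Concretely, the assumption $F\ell + G \simeq f$ means there exist $N \ge 0$ and $H \in \FF[x_1,\ldots,x_n,\epsilon]$ with
\[
F \cdot \ell + G \;=\; \epsilon^N f + \epsilon^{N+1} H.
\]
Since $F(\ell + \alpha) + G = (F\ell + G) + F\alpha$, it suffices to pick $\alpha \in \FF(\epsilon)$ so that $F\alpha \in \epsilon^{N+1} \FF[x_1,\ldots,x_n,\epsilon]$; then we can rewrite $F(\ell+\alpha)+G = \epsilon^N f + \epsilon^{N+1}(H + H')$ for some $H' \in \FF[x_1,\ldots,x_n,\epsilon]$, which directly gives $F(\ell+\alpha)+G \simeq f$.

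The one technical wrinkle is that $F$ lives in $\FF(\epsilon)[x_1,\ldots,x_n]$, not in $\FF[x_1,\ldots,x_n,\epsilon]$, so its coefficients can have denominators in $\epsilon$. I would handle this by clearing denominators: write $F = P/q$ with $P \in \FF[x_1,\ldots,x_n,\epsilon]$ and $q \in \FF[\epsilon] \setminus \{0\}$ (finitely many coefficients, so a common denominator exists). Then taking $\alpha = q \cdot \epsilon^{N+1} \in \FF(\epsilon)$ gives $F \alpha = P \cdot \epsilon^{N+1}$, which is manifestly an element of $\epsilon^{N+1}\FF[x_1,\ldots,x_n,\epsilon]$, as required. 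Plugging back in finishes the proof.

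The main obstacle, if one can even call it that, is just bookkeeping around the base ring: making sure that $\alpha$ can be chosen in $\FF(\epsilon)$ without disturbing the integrality of the error term. The clearing-of-denominators step above sidesteps this cleanly, and no deeper argument seems necessary. I would also remark that the choice of $\alpha$ is not unique: any element of $q \cdot \epsilon^{N+1} \FF[\epsilon]$ works, which is useful if one later needs to choose $\alpha$ nonzero to ensure that $\ell + \alpha$ has a genuine nonzero constant term.
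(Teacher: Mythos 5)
Your proof is correct and follows essentially the same approach as the paper's: clear denominators of the (finitely many) coefficients of $F$ by a common $q \in \FF[\epsilon]$, then absorb $F\alpha$ into the error term by multiplying by a sufficiently high power of $\epsilon$. Your write-up is in fact somewhat more explicit (giving the concrete choice $\alpha = q\cdot\epsilon^{N+1}$ and noting this $\alpha$ is automatically nonzero, which matters for the way the lemma is invoked), but the underlying argument is identical.
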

\begin{proof}
To prove this, imagine that $\alpha$ is a new independent variable.
Then, consider $F \cdot (\ell + \alpha) + G$ as a polynomial over $x_1, \ldots, x_n, \alpha$.
Notice that the coefficients of monomials in $x_1, \ldots, x_n$ are in $\FF(\epsilon)[\alpha]$.
Then, observe that there are only a finite number of coefficients in monomials that include $\alpha$.
Therefore, it is obvious that we can select $\alpha$ to make all of these coefficients polynomials (multiplying the denominators) and, by multiplying this $\alpha$ value by $\epsilon^N$ (for some large enough $N$), we ensure that this does not change the approximation.
\end{proof}
This completes the proof of Theorem \ref{thm:alg_formulas}.

\section{Future Directions}
We end by pondering some open problems.
\begin{itemize}
    \item
    When considering Theorem \ref{thm:esm_rep}, we, motivated by \cite{Shp02}, analyzed projections of the elementary symmetric polynomials by linear functions\footnote{This is done in their case as linear functions are sufficient to make the model universal}.
    It would be natural to relax this model to allow projections of affine functions.
    Of course, we can easily homogenize this case by introducing a new independent variable $x_0$, so this case can be considered as an extension of the case with linear inputs.
    We then, unfortunately, face the issue that this homogenization could increase the degree of the polynomial, which is a problem for our results, as they only work with degrees equal to $\operatorname{char}(\FF) + 1$.
    Can we extend our characterizations to work for polynomials of arbitrary degrees?

    \item
    Recall that the border affine Chow rank of a polynomial $f$ is given by the smallest $k$ such that $f$ can be computed by $\overline{\Sigma^{[k]}\Pi\Sigma}$.
    When the characteristic is zero, it is known that this value is at most two, but we have now shown that these results do not extend to positive characteristics (we get lower bounds linear in the number of variables).
    It would be an interesting line of research to study what sorts of upper bounds we can find when the characteristic is positive.

    \item
    To prove Theorem \ref{thm:esm_rep} in the case of $\operatorname{char}(\FF) = 2$, we use Corollary \ref{cor:esp_sums_to_reduce}, which provides a necessary and sufficient condition for being computable by $\Sigma^{[k]}\text{Sym}$.
    For fields such that $\operatorname{char}(\FF) > 2$, we are only able to provide a necessary condition.
    Is this condition also sufficient for these other fields?

    \item 
    If we further consider the necessary and sufficient condition given in Corollary \ref{cor:esp_sums_to_reduce}, we can naturally ask if this condition is easily ``computable."
    In a sense, if provided a degree three polynomial, could we determine in polynomial time if it can be represented in the $\Sigma^{[k]}\text{Sym}$ model?
\end{itemize}

\section*{Acknowledgements}
Thank you to my advisor Srikanth Srinivasan for encouraging and helping me throughout this entire line of research and steering me in the right directions through regular chats.
A lot of the ideas in this paper came from our long discussions.
Also, thank you for helping during the process of writing this paper, where your extensive knowledge of the field was paramount.
Further, thank you to Theo Fabris for being involved in some of these aforementioned discussions.

\bibliography{references}

\appendix

\section{The Elementary Symmetric Polynomials}\label{sec:esp}
In this section, we will describe some of the basic properties of the elementary symmetric polynomials, which are used throughout the paper.
Due to their simplicity, these polynomials possess many nice properties that we will use to simplify them and relate them to each other.
For example, given $a_1, \ldots, a_n, b_1, \ldots, b_m \in \FF$, we observe that
\begin{equation}\label{eq:esp_split}
\esym{n+m}{d}(a_1, \ldots, a_n, b_1, \ldots, b_m) = \sum_{k=0}^d\esym{n}{k}(a_1, \ldots, a_n)\esym{m}{d-k}(b_1, \ldots, b_m).
\end{equation}
We can also study their partial derivatives and observe that
\begin{equation}\label{eq:esp_partial_der}
\frac{\partial \esym{n}{d}}{\partial x_i}(x) = \esym{n}{d-1}(x) - x_i\frac{\partial \esym{n}{d-1}}{\partial x_i}(x) = \esym{n-1}{d-1}(x_1, \ldots, x_{i-1}, x_{i+1}, \ldots, x_n).
\end{equation}
We can then use this fact and a well-known property of homogeneous polynomials to conclude that
\begin{equation}\label{eq:esp_sum_partial_der}
\sum_{i=1}^nx_i\frac{\partial \esym{n}{d}}{\partial x_i}(x) = d \cdot \esym{n}{d}(x).
\end{equation}

Another important property of the elementary symmetric polynomials are Newton's identities.
We will denote $p_d^n(x_1, \ldots, x_n) = x_1^d + \dots + x_n^d$.
Then, Newton's identities are given by
\begin{equation}\label{eq:newton_identity}
d\esym{n}{d}(x_1, \ldots, x_n) = \sum_{k=1}^d(-1)^{k+1}p_k^n(x_1, \ldots, x_n)\esym{n}{d-k}(x_1, \ldots, x_n).
\end{equation}

\section{Formula Lower Bounds}\label{sec:formula_lower_bounds}
In this section, we will focus on proving the results of \cite{CKSV22}, specifically, we will prove Lemma \ref{lem:formula_lower_bound}.
This is done for the sake of completeness and because the statement that we use, while not difficult to see from the proof in the original paper, is slightly different.
Further, we provide a minor simplification to the original proof.

We will briefly recall the formal degree of a formula.
The formal degree of a leaf node is defined by the degree of the polynomial that the leaf node is labeled with.
The formal degree of a sum gate is defined as the maximum of the formal degrees of its children, and the formal degree of a product gate is defined by the sum of the formal degrees of its children.
Observe that the formal degree of a formula upper bounds the degree of the polynomial it computes.

Instead of the typical definition, we will define the size of a formula to be the number of leaves whose label is not constant.
Observe that this is linearly related to the number of leaves in a formula and its size (noting that we can ``collapse" a node where both children are constants).
We do this to help simplify the proofs.

The idea behind the following proof will be to iteratively find large sub-trees in our formula whose formal degree is strictly less than the degree of the polynomial we compute.
We will then ``peel" these sub-trees from our formula and replace them with the constant part of the polynomial our sub-tree computes, where we will continue our process.
We will observe that each step in this process only adds an error term that can be represented as the product of two constant-free polynomials.
Finally, we will combine our sub-trees of low formal degree to represent our polynomial by a sum of products of constant-free polynomials summed to a low-degree polynomial.
Our conclusion will then follow using Lemma \ref{lem:kumar_lemma}.

To begin, we will state a simple fact about algebraic circuits, which merely states that every formula has a node with formal degree in the range $[t, 2t-1]$.

\begin{proposition}[See Lemma 5.11 of \cite{CKSV22}]\label{prop:formula_vertex_pick}
Let $\Phi$ be a formula of formal degree $d$. Then for each $t \in [1, d/2]$, there is a vertex $v$ in $\Phi$ such that $\Phi_v$ has formal degree at least $t$ and at most $2t - 1$.

Further, there exists polynomials $h, f \in \FF[X]$ such that $\Phi = h\Phi_v + f$ and, for every $\gamma \in \FF$, $h\gamma + f$ can be computed by a formula of size at most $|\Phi| - |\Phi_v|$.
\end{proposition}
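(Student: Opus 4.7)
The plan is to prove the two claims by a simple descent argument followed by a structural observation about trees.

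For the first assertion, I would produce the vertex $v$ by a greedy walk from the root. Since the hypothesis $t \le d/2$ gives $d \ge 2t$, the root has formal degree at least $2t$. Starting at the root, at each node $u$ whose formal degree is at least $2t$, I move to the child of $u$ with the largest formal degree. Because leaves have formal degree $0$ or $1 \le t$, this walk must eventually reach a node $v$ whose formal degree is strictly less than $2t$. Let $u$ be its parent. If $u$ were a sum gate, then by definition its formal degree equals the maximum of its children's formal degrees, so the largest-degree child $v$ would also have formal degree at least $2t$, a contradiction. Hence $u$ is a product gate, and since the formal degrees of its children sum to at least $2t$, the larger child $v$ has formal degree at least $t$. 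Thus the formal degree of $v$ lies in $[t, 2t-1]$.

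For the second assertion, I would exploit the fact that $\Phi$ is a tree, so the subtree $\Phi_v$ occurs exactly once inside $\Phi$. Introduce a fresh formal variable $Y$ and let $\Phi^\star$ denote the formula obtained from $\Phi$ by replacing the subtree at $v$ by a leaf labeled $Y$. A straightforward induction from $v$ upwards shows that $\Phi^\star$, viewed as a polynomial in $\FF[X][Y]$, is affine in $Y$: sum gates preserve $Y$-degree, and at each product gate only one child (the one on the path to $v$) contains $Y$, so $Y$-degrees add trivially. Hence $\Phi^\star = h Y + f$ for some $h, f \in \FF[X]$, and substituting $Y = \Phi_v$ gives $\Phi = h\Phi_v + f$. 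For any $\gamma \in \FF$, specializing $Y = \gamma$ in $\Phi^\star$ yields a formula computing $h\gamma + f$ in which the subtree $\Phi_v$ has been replaced by a single constant leaf, contributing $0$ to the size under the convention that size counts only non-constant leaves, so the resulting formula has size at most $|\Phi| - |\Phi_v|$.

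There is no real obstacle here, as both parts are elementary; the only delicate point is the boundary bookkeeping in the descent (verifying the dichotomy between sum and product gates at the parent of $v$) and ensuring the size accounting matches the non-standard convention introduced in Section~\ref{sec:formula_lower_bounds} that counts only non-constant leaves. Once Proposition~\ref{prop:formula_vertex_pick} is in hand, it feeds directly into the iterative ``peeling'' argument used to prove Lemma~\ref{lem:formula_lower_bound} via Lemma~\ref{lem:kumar_lemma}.
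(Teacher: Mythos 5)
Your proof is correct and follows essentially the same route as the paper's: a descent/monotonicity argument on formal degree to locate $v$ (the paper phrases it as the contrapositive — a parent of degree $\ge 2t$ cannot have all children of degree $< t$ — while you make the greedy walk explicit and split into the sum/product cases), and linearity of $\Phi$ in the unique occurrence of the subtree $\Phi_v$ to get $\Phi = h\Phi_v + f$ and the size bound after substituting a constant. The extra detail you supply (the fresh variable $Y$, the induction for affineness, and the careful bookkeeping under the non-constant-leaf size convention) is a faithful unpacking of the paper's terser statements rather than a different argument.
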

\begin{proof}
First notice that the formal degree of a formula monotonically increases as one goes from a leaf to root. The leaves of the formula have formal degree at most one and the root has formal degree $d$.
Then, we observe that the formal degree of an internal node is at most the sum of the formal degrees of its children.
It is, therefore, impossible for a parent node to have formal degree above $2t - 1$ with children of formal degree strictly less than $t$.

Now, let $v$ be a vertex in $\Phi$ that satisfies the previous hypotheses. Notice that $\Phi$ is linear in $\Phi_v$, so there are $h, f \in \FF[X]$ be such that $\Phi = h\Phi_v + f$. Notice that, for any $\gamma \in \FF$, we could replace $\Phi_v$ with a leaf node labeled with $\gamma$, and it would have size $|\Phi| - |\Phi_v|$.
\end{proof}

In the next proposition, we represent our process of converting our polynomial to the form we would like.
Specifically, we suppose that we have some $d'$, and we want to represent a polynomial by a formula of formal degree strictly less than $d'$ with some error, represented by the sum of products of constant-free polynomials.
We will repeatedly apply Proposition \ref{prop:formula_vertex_pick} to complete this simplification.

\begin{proposition}[See Lemma 5.12 of \cite{CKSV22}]\label{prop:formal_dgree_reduce}
Let $\Phi$ be a formula of size $s$ and formal degree $d$. For every $d' \ge 3$, there is a formula $\Phi'$ and polynomials $f_1, \ldots, f_k, g_1, \ldots, g_k \in \FF[X]$ such that
\begin{itemize}
    \item $\Phi = \Phi' + \sum_{i=1}^kf_ig_i$,
    \item $\Phi'$ has formal degree less than $d'$,
    \item $f_1, \ldots, f_k, g_1, \ldots, g_k$ are constant-free polynomials, and
    \item $k\frac{d'}{3} \le s$.
\end{itemize}
\end{proposition}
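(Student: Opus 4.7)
The plan is to prove Proposition \ref{prop:formal_dgree_reduce} by strong induction on the size $s=|\Phi|$, driving the induction by repeatedly peeling off a medium-sized subformula via Proposition \ref{prop:formula_vertex_pick}.

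First I would dispose of the trivial case: if the formal degree of $\Phi$ satisfies $d<d'$, then setting $\Phi'=\Phi$ and $k=0$ satisfies all four conditions vacuously. So assume $d\ge d'\ge 3$. The idea is now to apply Proposition \ref{prop:formula_vertex_pick} with parameter $t=\lceil d'/3\rceil$ (which is legal because $t\le d'/2\le d/2$ and $t\ge 1$), obtaining a vertex $v$ whose subformula $\Phi_v$ has formal degree in $[d'/3,\,2d'/3]$. By that proposition, there are $h,f\in\FF[X]$ with $\Phi=h\,\Phi_v+f$, and for every scalar $\gamma$ the polynomial $h\gamma+f$ is computed by a formula of size at most $|\Phi|-|\Phi_v|$.

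Next I would isolate the constant and constant-free parts. Write $h=h'+\alpha$ and $\Phi_v=g'+\beta$, with $h',g'\in\FF[X]$ constant-free and $\alpha,\beta\in\FF$. Expanding,
\[
\Phi \;=\; (h'+\alpha)(g'+\beta)+f \;=\; h'g' \;+\; \alpha g' \;+\; (h\beta+f).
\]
Here $\alpha g'$ is computed by a formula of size at most $|\Phi_v|$ and formal degree at most that of $\Phi_v$, i.e.\ at most $2d'/3<d'$; and $h\beta+f$ is computed by a formula of size at most $|\Phi|-|\Phi_v|$, which is strictly smaller than $s$ because $|\Phi_v|\ge 1$. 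The inductive hypothesis applied to this smaller formula yields a decomposition $h\beta+f=\Phi''+\sum_{i=1}^{k'}f_ig_i$ with $\Phi''$ of formal degree $<d'$, each $f_i,g_i$ constant-free, and $k'\cdot d'/3\le |\Phi|-|\Phi_v|$.

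Finally I would assemble: let $\Phi'$ be a formula for $\alpha g'+\Phi''$ (a sum of two formulas of formal degree $<d'$, hence itself of formal degree $<d'$), and set $f_{k'+1}=h'$, $g_{k'+1}=g'$, both constant-free. Then
\[
\Phi \;=\; \Phi' \;+\; \sum_{i=1}^{k'+1} f_i g_i.
\]
For the size bound I need $|\Phi_v|\ge d'/3$; this is the one place where the choice to measure size by the number of non-constant leaves matters, since each such leaf contributes at most one to formal degree and hence the formal degree of any subformula is bounded above by its size. Thus $|\Phi_v|\ge \text{formal degree of }\Phi_v\ge d'/3$, which gives
\[
(k'+1)\frac{d'}{3}\;\le\;(|\Phi|-|\Phi_v|)+\frac{d'}{3}\;\le\;|\Phi|,
\]
completing the induction. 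The main obstacle is the bookkeeping around the formal degree of $\alpha g'$: one must verify that removing the constant $\beta$ from $\Phi_v$ does not increase its formal degree, which is immediate from the definition since $g'=\Phi_v-\beta$ can be computed by the same formula with the appropriate constant-leaf correction. Once that is noted, the rest is a mechanical verification of the four bullet points.
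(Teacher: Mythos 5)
Your proof is correct and follows essentially the same route as the paper: strong induction on size, peeling off a subformula of medium formal degree via the vertex-picking lemma, splitting $h$ and $\Phi_v$ into constant and constant-free parts, recursing on the smaller formula for $h\beta+f$, and accounting for the new product $h'g'$ in the size bound by noting $|\Phi_v|\ge d'/3$. The paper also takes this exact path; you add a couple of small clarifications it glosses over (why $|\Phi_v|\ge d'/3$ under the non-constant-leaf size measure, and why subtracting $\beta$ from $\Phi_v$ does not increase formal degree), but the argument is the same.
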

\begin{proof}
We will prove this inductively on the size of the formula $s$.
We first observe that the claim is trivial if $d < d'$ (which certainly happens when the circuit is sufficiently small), as we can set $\Phi' = \Phi$.
Thus, if $s$ is small enough such that $d < d'$, the claim is obvious.

Now, assume the hypothesis is true for all circuit of size strictly smaller than $s$.
We assume that $d \ge d' \ge 3$. Then, we apply Proposition \ref{prop:formula_vertex_pick} so that $v$ is a vertex in $\Phi$ such that $\Phi_v$ has formal degree between $d'/3$ and $2d'/3$ and let $h, f \in \FF[X]$ satisfy the rest of the proposition. We will write $h = h' + \alpha$ and $\Phi_v = g' + \beta$, where $h', g' \in \FF[X]$ are constant-free polynomials and $\alpha, \beta \in \FF$. Then,
\[
\Phi = h\Phi_v + f = (h' + \alpha)(g' + \beta) + f = h'g' + \alpha g' + (h\beta + f).
\]
Notice that $\alpha g'$ can be computed by a formula of size at most $|\Phi_v|$ and formal degree at most $2d'/3$ and, by Proposition \ref{prop:formula_vertex_pick}, $h\beta + f$ can be computed by a formula of size at most $|\Phi| - |\Phi_v|$.
We can thus apply the inductive hypothesis to $h\alpha + f$ and let $\Phi'$ be a formula and $f_1, \ldots, f_k, g_1, \ldots, g_k \in \FF[X]$ be polynomials that satisfy the hypotheses. We have that $h', g'$ are constant-free, so we let $f_{k+1} = h'$ and $g_{k+1} = g'$. We define $\Phi''$ to be the circuit resulting from summing $\alpha g'$ and $\Phi'$, so that $\Phi''$ has size at most $s$ and formula degree at most $d'$. Then,
\[
\Phi = \Phi'' + \sum_{i=1}^{k+1}f_ig_i.
\]
Further notice that (observing that $|\Phi_v| \ge d'/3$)
\[
(k+1)\frac{d'}{3} \le |\Phi| - |\Phi_v| + \frac{d'}{3} \le |\Phi|.\qedhere
\]
\end{proof}

We can now use this to conclude the main result of this section.

\begin{proof}[Proof of Lemma \ref{lem:formula_lower_bound}]
Suppose $\Phi$ has formal degree $D$. Then, we apply Proposition \ref{prop:formal_dgree_reduce} to obtain polynomials $h, f_1, \ldots, f_k, g_1, \ldots, g_k \in \FF[X]$ such that
\begin{itemize}
    \item $f = p + \sum_{i=1}^kf_ig_i$,
    \item $\deg(p) < d$,
    \item $f_1, \ldots, f_k, g_1, \ldots, g_k$ are constant-free, and
    \item $k\frac{d}{3} \le s$.
\end{itemize}

Then, if we consider $\Phi$ as a formula over the algebraic closure of $\FF$, we can apply Proposition \ref{lem:kumar_lemma} to conclude that
\[
\dim V_2(f) \ge n - 2k \ge n - 2s\frac{3}{d}.
\]
Therefore,
\[
s \ge \frac{d}{6}(n - \dim V_2(f)).\qedhere
\]
\end{proof}

\end{document}